\documentclass{article}

\usepackage{arxiv}

\usepackage[utf8]{inputenc} 
\usepackage[T1]{fontenc}    
\usepackage{hyperref}       
\usepackage{url}            
\usepackage{booktabs}       
\usepackage{amsfonts}       
\usepackage{nicefrac}       
\usepackage{microtype}      
\usepackage{graphicx}
\usepackage[numbers, compress]{natbib}
\usepackage{doi}

\usepackage{orcidlink}
\usepackage{algorithm}
\usepackage{algpseudocode}
\usepackage{amsmath, amssymb, amsthm}
\usepackage{subcaption}
\usepackage[normalem]{ulem} 
\usepackage{multirow}
\usepackage{xcolor}         
\usepackage{makecell}
\usepackage{empheq}
\usepackage[most]{tcolorbox}
\usepackage{wrapfig}

\algnewcommand\Input{\item[\textbf{Input:}]}
\algnewcommand\Output{\item[\textbf{Output:}]}

\theoremstyle{plain}
\newtheorem{theorem}{Theorem}[section]
\newtheorem{proposition}[theorem]{Proposition}
\newtheorem{lemma}[theorem]{Lemma}
\newtheorem{corollary}[theorem]{Corollary}
\theoremstyle{definition}

\newtheorem{assumption}[theorem]{Assumption}

\theoremstyle{remark}
\newtheorem{remark}[theorem]{Remark}

\usepackage{orcidlink}
 \usepackage{algorithm}
 \usepackage{algpseudocode}
 \usepackage{amsmath, amssymb, amsthm}

\theoremstyle{plain}

\usepackage[utf8]{inputenc} 
\usepackage[T1]{fontenc}    
\usepackage{hyperref}       
\usepackage{url}            
\usepackage{booktabs}       
\usepackage{amsfonts}       
\usepackage{nicefrac}       
\usepackage{microtype}      
\usepackage{xcolor}         

\usepackage[utf8]{inputenc} 
\usepackage[T1]{fontenc}    
\usepackage{hyperref}       
\usepackage{url}            
\usepackage{booktabs}       
\usepackage{amsfonts}       
\usepackage{nicefrac}       
\usepackage{microtype}      
\usepackage{xcolor}         
\usepackage{empheq}
\usepackage[most]{tcolorbox}
\usepackage{wrapfig}

\newtcbox{\eqbox}{
  colback=blue!15,
  colframe=black,
  arc=3mm,
  boxrule=0.8pt,
  left=6pt,
  right=6pt,
  top=6pt,
  bottom=6pt
}

\title{OASIS: Observation-Aware Simulation-Based Inference via Distributional Matching}

\author{%
  Arya~Farahi \orcidlink{0000-0003-0777-4618}\\
  Department of Statistics and Data Sciences, University of Texas at Austin, Austin, Texas 78712, USA\\
  The NSF-Simons AI Institute for Cosmic Origins, USA \\
  \texttt{arya.farahi@austin.utexas.edu}\\
    \AND
  Conghao~Zhou \orcidlink{tbd}\\
  Department of Physics, University of California, Santa Cruz, CA 95064, USA\\
  \texttt{zhou.conghao@ucsc.edu}\\
    \AND
  Ritwik~Vashistha \orcidlink{tbd}\\
  Department of Statistics and Data Sciences, University of Texas at Austin, Austin, Texas 78712, USA\\
  The NSF-Simons AI Institute for Cosmic Origins, USA \\
    \texttt{ritwik.v@utexas.edu} 
}

\date{}

\begin{document}

\maketitle

\begin{abstract}
  We introduce \texttt{OASIS}, a simulation-based inference framework for scientific settings where observations are distorted by measurement error, selection effects, and other survey-specific transformations. In many real applications, simulators generate latent, noiseless quantities, while the data are observed only after passing through a complex observational pipeline. Standard simulation-based inference methods often ignore this distinction, comparing observations to idealized simulator outputs or relying on low-dimensional summaries that can miss important structure. \texttt{OASIS} addresses this mismatch by explicitly embedding the observation model into the simulator and performing inference directly at the level of observed-data distributions. The method constructs a pseudo-posterior by reweighting prior samples according to a maximum mean discrepancy (MMD) loss between the empirical distributions of the observed data and forward-simulated observations, thereby avoiding both handcrafted summaries and learned neural surrogates. We provide theoretical guarantees for Monte Carlo consistency, convergence of the empirical pseudo-posterior to its population counterpart, and posterior concentration on the MMD-identified parameter set, with consistency for the true parameter under correct specification and identifiability. In controlled errors-in-variables regression experiments, \texttt{OASIS} delivers robust parameter recovery and well-calibrated uncertainty under heterogeneous and non-Gaussian measurement noise. We then demonstrate the method on a realistic cosmological application involving galaxy cluster observations across multiple wavelengths, in which latent physical properties are linked to observables through nonlinear scaling relations, heteroscedastic errors, selection functions, and incomplete coverage.
\end{abstract}

\section{Introduction}

Modern scientific inference is increasingly driven by complex simulators that generate high-dimensional data under richly structured physical models \citep{karniadakis2021physics}. In many domains, such as cosmology, particle physics, climate science, and systems biology, the data-generating process is only partially observable. Meaning latent quantities of interest are transformed by measurement noise, selection effects, censoring, and survey-specific distortions before they are recorded \citep{carroll2006measurement,fuller2009measurement}. As a result, inference must proceed by comparing observed data not to idealized simulator outputs, but to their \emph{observationally transformed} counterparts. This distinction is central in practice, yet remains under-addressed in much of the simulation-based inference (SBI) literature \citep{cranmer2020frontier}, where comparisons are often performed either on latent quantities or on low-dimensional summary statistics that do not faithfully reflect the full observational pipeline.

Approximate Bayesian computation (ABC) and related likelihood-free methods provide a flexible framework for SBI when the likelihood is intractable, but forward simulation is available \citep{beaumont2002approximate,sisson2018handbook}. However, classical ABC relies on carefully chosen summary statistics and distance functions, which can lead to information loss and bias, particularly in high-dimensional settings or when observational distortions are complex. Recent approaches have sought to mitigate these issues using learned summaries \citep{papamakarios2019sequential} or distributional distances such as the maximum mean discrepancy (MMD) \citep{fukumizu2011kernel,park2016k2}, which operate directly on empirical distributions. While these advances improve expressivity, they typically assume that simulated and observed data are directly comparable, implicitly neglecting the structured observation processes that intervene between latent states and recorded measurements.

Neural simulation-based inference methods (e.g., neural posterior, likelihood, and ratio estimation) approximate $p(\theta\!\!\mid\!\!y)$, $p(y\!\!\mid\!\!\theta)$, or density ratios using learned models trained on simulated data \citep{cranmer2015approximating,papamakarios2016fast,lueckmann2019likelihood,hermans2020likelihood}. These approaches scale to high-dimensional observations and enable amortized inference \citep{zammit2025neural}, but require demanding computational resources, careful tuning, and large simulation budgets, and can be sensitive to architectural and optimization choices. More fundamentally, they assume that training simulations match the observed-data distribution. In many scientific settings, simulators produce latent quantities that must be transformed through complex and imperfectly characterized observational pipelines. If this mismatch is not explicitly modeled, the learned representations may be misaligned with the observed data, leading to biased or miscalibrated inference \citep{hermans2022crisis}. This motivates inference procedures that operate directly on observed-data distributions while explicitly incorporating observational mechanisms, instead of relying on surrogates trained on idealized simulations.

We introduce \texttt{OASIS}\footnote{OASIS:Observation-Aware Simulation Inference Scheme}, an SBI framework that incorporates observational mechanisms and operates directly on observed-data distributions. The method constructs a pseudo-posterior by comparing empirical observed data with simulated data after passing through an observation simulator with MMD \citep{gretton2006kernel,fukumizu2011kernel,vashistha2026convolutional}. This yields a likelihood-free approach that is distributionally expressive, avoids handcrafted summaries, and aligns inference with the data acquisition process. \texttt{OASIS} has a Gibbs-type pseudo-posterior interpretation defined through a distributional loss \citep{geman1984stochastic,jiang2008gibbs,bissiri2016general}, with two key differences: the loss is stochastic due to Monte Carlo simulation, inducing dependence on the simulation budget and a temperature parameter \citep{farahi2026simulation}, and the observational model is embedded in the forward simulation, so inference targets the observed-data law. We establish Monte Carlo consistency, convergence of the empirical pseudo-posterior, and posterior concentration on minimizers of the MMD discrepancy, yielding consistency under correct specification and identifiability despite measurement error and selection effects. We benchmark against error-in-variables methods in linear regression and evaluate on an astrophysical inference problem involving galaxy clusters \citep{allen2011cosmological}, where nonlinear relations, heteroscedastic noise, selection effects, and partial coverage challenge standard approaches. By forward simulating the full observational pipeline and comparing distributions, \texttt{OASIS} achieves robust parameter recovery and well-calibrated uncertainty.

\textbf{Contributions.}
We advance SBI by incorporating observational mechanisms into the inference loop and providing a theoretically grounded, distribution-level alternative to summary-based methods, combining kernel techniques, simulation-based modeling, and generalized Bayesian inference. \vspace{-2mm}

\section{\texttt{OASIS} Method}
\label{sec:setup}

\paragraph{Context.}
\texttt{OASIS} targets settings where inference must account for measurement error and observational distortions. In many applications, the simulator generates a latent, noiseless population, whereas the observed dataset is obtained only after noise, heteroscedastic scatter, missingness, censoring, truncation, thresholding, or survey selection have acted on that population. Consequently, the relevant comparison is not between observed data and latent simulator output, but between observed data and simulator output after the same observational mechanism has been applied.

\paragraph{Notation.}
We observe $\mathcal D_{\mathrm{obs}}=\{y_i^{\mathrm{obs}}\}_{i=1}^n$, with $y_i^{\mathrm{obs}}\in\mathbb R^d$, drawn from an observed-data distribution $P_{\mathrm{obs}}$. Each $y_i^{\mathrm{obs}}$ represents the output of the observational pipeline. The goal is to compare the empirical distribution of $\mathcal D_{\mathrm{obs}}$ with that induced by a simulator under parameter $\theta\in\Theta\subseteq\mathbb R^p$, with prior $p(\theta)$. For each $\theta$, the simulator generates latent variables $u^{\mathrm{true}}\sim P_\theta^{\mathrm{true}}$, followed by an observation step $y^{\mathrm{sim,obs}}\sim P^{\rm Err}_\theta(\cdot\mid u^{\mathrm{true}})$, where $P^{\rm Err}_\theta$ encodes measurement error, noise, and selection effects. We assume samples from $P^{\rm Err}_\theta$ can be simulated. While strong, it holds in many practical settings \citep[e.g.,][]{wimmer2000proper,creevey2013large,chen2019calibration,leung2019simultaneous,kuhn2019kinematics,glazer2026beyond}. The induced observed-data law is
\begin{equation} \label{eq:obs-law}
P_\theta^{\mathrm{obs}}(Y^{\rm sim}) = \int P^{\rm Err}_\theta(Y^{\rm sim}\mid u,\phi)\,P_\theta^{\mathrm{true}}(du),
\qquad
Y^{\rm sim}\subseteq \mathbb R^d.
\end{equation}
Inference targets values of $\theta$ for which $P_\theta^{\mathrm{obs}}$ is close to $P_{\mathrm{obs}}$. In our method, a closed-form solution for $P^{\rm Err}$ is not a requirement; we only need to be able to sample from it.

\paragraph{Inference Method (\texttt{OASIS}).}
The method proceeds by comparing the empirical observed-data distribution with forward simulations that explicitly incorporate both the latent data-generating process and the observational mechanism. See Algorithm~\ref{alg:OASIS} in Supplementary Materials.

We draw $\theta_1,\dots,\theta_{n_\theta} \sim p(\theta)$.  For each parameter value $\theta$, we generate a Monte Carlo approximation of the observed-data law $P_\theta^{\mathrm{obs}}$ through a two-stage procedure. Then, we draw latent samples
\begin{equation}
u_{\theta,m}^{\mathrm{true}} \sim P_\theta^{\mathrm{true}}, \qquad m=1,\dots,M.
\end{equation}
Second, each latent realization is mapped through the observation model by sampling
\begin{equation}
y_{\theta,m}^{\mathrm{sim,obs}} \sim P^{\rm Err}_\theta(\cdot \mid u_{\theta,m}^{\mathrm{true}}).
\end{equation}
This yields a simulated observed dataset $\{y_{\theta,m}^{\mathrm{sim,obs}}\}_{m=1}^M$, with associated empirical measure $\widehat P_{\theta,M}^{\mathrm{sim,obs}} = \frac{1}{M}\sum_{m=1}^M \delta_{y_{\theta,m}^{\mathrm{sim,obs}}}$, which represents the distribution of observations induced by $\theta$ after passing through the full observational pipeline.

We assess agreement between the observed and simulated data at the distributional level using MMD \citep{gretton2006kernel,gretton2012kernel}. For a bounded positive-definite kernel $k$ with RKHS $\mathcal H_k$,
\begin{equation}
\mathrm{MMD}_k^2(P_{\mathrm{obs}}, P_{\theta,M}^{\mathrm{sim,obs}}) = \|\mu_k(P_{\mathrm{obs}})-\mu_k(P_{\theta,M}^{\mathrm{sim,obs}})\|_{\mathcal H_k}^2,
\label{eq:mmd-def}
\end{equation}
where $\mu_k(P)=\int k(y,\cdot)\,P(dy)$. If $k$ is characteristic, $\mathrm{MMD}_k=0$ implies $P_{\mathrm{obs}}=P_{\theta,M}^{\mathrm{sim,obs}}$.  \cite{vashistha2026convolutional} provide identifiability conditions for the observation model under which equality of observed-data laws implies $P_{\mathrm{true}} = P_\theta^{\mathrm{true}}$. The empirical loss, defined as
\begin{equation}
\Delta_{n,M}(\theta) := \mathrm{MMD}_k^2\!\bigl(\widehat P_{\mathrm{obs}}, \widehat P_{\theta,M}^{\mathrm{sim,obs}}\bigr),
\end{equation}
is small when the forward-simulated observed distribution matches the empirical distribution of data.

Inference is performed by constructing a likelihood-free pseudo-posterior through weighted prior sampling. We assign weights
\begin{equation}
\widetilde w_j = \exp\!\left\{-\frac{\Delta_{n,M}(\theta_j)}{2\tau^2}\right\}, 
\qquad
w_j = \frac{\widetilde w_j}{\sum_{k=1}^{n_\theta}\widetilde w_k},
\label{eq:weights}
\end{equation}
yielding the weighted empirical measure $\Pi_{n_\theta}^{\mathrm{MMD}} = \sum_{j=1}^{n_\theta} w_j\,\delta_{\theta_j}$. Parameters has high weight when their simulated observed-data distributions are close to the empirical distribution of the observations.
At the population level, the method targets the pseudo-posterior
\begin{equation}
\pi_\tau(d\theta)\propto p(\theta)\exp\{-\Delta(\theta)/(2\tau^2)\}d\theta,
\qquad
\Delta(\theta)=\mathrm{MMD}_k^2(P_{\mathrm{obs}},P_\theta^{\mathrm{obs}}).
\end{equation}
As $\tau \downarrow 0$, $\pi_\tau$ concentrates on $\Theta^\dagger = \arg\min_{\theta\in\Theta} \Delta(\theta)$, the set of parameters whose induced observed-data laws match $P_{\mathrm{obs}}$. Under correct specification and identifiability, this set reduces to the true parameter $\theta^\star$. The weighted sample $\{(\theta_j,w_j)\}$ serves as a posterior surrogate. For any test function $h$, $\int h(\theta)\,d\Pi^{\mathrm{MMD}} \approx \sum_j w_j h(\theta_j)$, with posterior predictive quantities obtained via forward simulation.

\paragraph{Relation to Literature.} 
The proposed framework is related to several strands of likelihood-free and generalized Bayesian inference \citep{jiang2008gibbs,bissiri2016general,cranmer2020frontier}, but differs from existing formulations in the object being matched and in the role of the observational mechanism. From the perspective of generalized Bayes, our update has the form of a Gibbs posterior $\pi(\mathrm d\theta\mid y_{\mathrm obs})\propto p(\theta)\exp\{-\lambda \ell(y_{\mathrm obs},\theta)\}\mathrm d\theta$, with the loss defined by a discrepancy between the empirical observed-data distribution and the simulated observed-data distribution. Unlike the standard setup in generalized Bayes, however, the loss is not deterministic conditional on the data. It is itself induced by Monte Carlo draws from the simulator, which is composed of the observation model. This yields an $(M,\tau)$-indexed pseudo-posterior in which $M$ controls Monte Carlo variability and $\tau$ controls the sharpness of the update. In the large-$M$ regime, the construction approaches a deterministic Gibbs posterior based on the population discrepancy $\Delta(\theta)$, linking the method to generalized Bayes, Gibbs posteriors, and minimum-distance or risk-based inference with well-characterized asymptotics \citep{miller2021asymptotic}. This also distinguishes the present construction from robust generalized Bayesian procedures such as coarsened or calibrated posteriors, where robustness is introduced through alternative update rules or temperature calibration rather than through an explicit forward model for the data-acquisition process \citep{miller2019robust,holmes2017assigning,martin2022direct}. 

The method is also closely connected to approximate Bayesian computation \citep{beaumont2002approximate,frazier2018asymptotic,sisson2018handbook}. Classical ABC accepts or reweights parameter draws according to the proximity of simulated and observed data, typically through low-dimensional summaries, while semi-automatic ABC and related variants learn or construct summaries designed to improve efficiency. Synthetic-likelihood methods replace the intractable likelihood by a parametric model, often Gaussian, for summary statistics, thereby reducing dimensionality at the cost of an additional approximation \citep{wood2010statistical,price2018bayesian}. Indirect inference and auxiliary-model ABC make a similar tradeoff by comparing observed and simulated data through auxiliary estimators, scores, or estimating equations \citep{drovandi2015bayesian,drovandi2018abc}. In contrast, our approach operates directly on empirical measures and does not require either handcrafted summaries or an auxiliary parametric likelihood. The price is that the comparison takes place in a richer function space, shifting the burden from summary design to the choice of kernel and the simulation budget. 

Among distributional ABC methods, the closest antecedents are kernel ABC \citep{fukumizu2011kernel} and K2-ABC \citep{park2016k2}, which use kernel mean embeddings and MMD to compare empirical distributions. At the algorithmic level, our weighting rule is indistinguishable from K2-ABC. The key conceptual difference is that our method is derived from generalized Bayes principles, similar to MMD-Bayes \citep{cherief2020mmd}. The primary technical difference is that we make the observation operator explicit and define the target of inference through the induced observed-data law $P_\theta^{\mathrm{obs}}$, rather than treating simulator output and observations as directly comparable objects. This distinction is required in scientific settings with measurement error, heteroscedastic noise, censoring, truncation, missingness, or selection effects, where the latent simulator law and the observed-data law are fundamentally different. In that sense, the proposed method is not merely a kernelized ABC procedure, but a distributional pseudo-posterior defined on the image of the simulator under a known or quantifiable observational map. 

More broadly, the method belongs to a class of likelihood-free procedures based on discrepancies between probability measures, including Wasserstein ABC \citep{bernton2019approximate}, sliced-Wasserstein ABC \citep{nadjahi2020approximate}, and classifier- or two-sample-test-based simulation-based inference \citep{chatterjee2024kernel}. These methods share the principle of replacing likelihood evaluation by distributional comparison, but differ in geometry, statistical regularity, and computational scaling. Relative to Wasserstein-based approaches, MMD yields a reproducing-kernel representation with well-developed empirical process tools and a direct connection to kernel two-sample testing \citep{gretton2012kernel}. Relative to neural SBI methods such as neural posterior, likelihood, and ratio estimation, our approach forgoes an amortized neural surrogate in favor of direct particle reweighting \citep{papamakarios2019sequential,cranmer2020frontier,hermans2020likelihood}. This avoids approximation error from training a high-capacity density estimator, but it also sacrifices the amortization and sample-efficiency gains that neural methods can provide in repeated-inference settings. The contribution of the present work is therefore the integration of three ingredients that are usually treated separately: an explicit observation model, distribution-level comparison through a characteristic-kernel MMD loss, and a pseudo-posterior formulation with asymptotic guarantees tailored to the observationally realistic setting. \vspace{-2mm}

\section{Theoretical Guarantees}
\label{sec:theory}

This section develops the theoretical properties of the MMD-based pseudo-posterior introduced in Section~\ref{sec:setup}. The central object is the population discrepancy
\begin{equation}
\Delta(\theta)
=
\mathrm{MMD}_k^2\!\bigl(P_{\mathrm{obs}},P_\theta^{\mathrm{obs}}\bigr),
\end{equation}
together with its empirical counterpart
\begin{equation}
\Delta_{n,M}(\theta)
=
\mathrm{MMD}_k^2\!\bigl(\widehat P_{\mathrm{obs}},\widehat P_{\theta,M}^{\mathrm{sim,obs}}\bigr),
\end{equation}
where $\widehat P_{\mathrm{obs}}$ is the empirical distribution of the observed sample and $\widehat P_{\theta,M}^{\mathrm{sim,obs}}$ is the empirical distribution of a simulated observed batch generated under~$\theta$. With $\theta_1,\dots,\theta_{n_\theta}\stackrel{\mathrm{iid}}{\sim}p(\theta)$, recall that the weighted empirical measure is
\begin{equation}
\Pi_{n_\theta}^{\mathrm{MMD}}(d\theta) = \sum_{j=1}^{n_\theta} w_j\,\delta_{\theta_j}(d\theta),
\qquad
w_j = \frac{\exp\{-\Delta_{n,M}(\theta_j)/(2\tau^2)\}} {\sum_{k=1}^{n_\theta}\exp\{-\Delta_{n,M}(\theta_k)/(2\tau^2)\}}.
\end{equation}

This analysis establishes three distinct guarantees.  \emph{(i) Monte Carlo consistency}: finite weighted particles approximate the empirical pseudo-posterior. \emph{(ii) Statistical convergence}: the empirical pseudo-posterior approaches its population analogue as $n,M\to\infty$. \emph{(iii) Posterior consistency}: when $\tau_{n,M}\downarrow0$ at a rate slow enough relative to the empirical approximation error, the pseudo-posterior concentrates on the MMD-identified set, and under injectivity of the observed-data forward map, on the true parameter itself. Hence, the method enjoys a posterior-consistency property relative to the distributional target induced by MMD. See Appendix~\ref{app:assumptions} for the list of assumptions. 

The theoretical results presented here are largely built from standard ingredients, but their combination in the present setting yields a nontrivial synthesis. The Monte Carlo consistency result (Theorem~\ref{thm:mc}) is a direct consequence of classical laws of large numbers for self-normalized importance sampling. The fixed-temperature convergence result (Theorem~\ref{thm:fixedtau}) follows from uniform convergence of the empirical loss together with standard continuous mapping arguments, and is closely related to existing results in empirical process theory and M-estimation \citep{van2000asymptotic}. The identification statements based on characteristic kernels are also standard consequences of MMD  \citep{gretton2012kernel,vashistha2026convolutional}. The main result is the posterior concentration result (Theorem~\ref{thm:consistency}), which combines uniform stochastic control of the MMD loss with a vanishing-temperature Gibbs posterior argument to establish concentration on the MMD-identified set. While concentration arguments have been studied in the generalized Bayesian and Gibbs posterior literature \citep{mcgoff2022gibbs,lopes2022bayes}, the present result adapts these ideas to a simulation-based setting with two sources of randomness (observational and simulation samples) and a distributional loss defined via kernel embeddings. Thus, although each component builds on established results, their integration in SBI setting is not immediate and requires a careful combination of empirical process arguments, kernel-based distributional theory, and Gibbs posterior asymptotics. Particularly, the resulting framework provides a unified treatment of MMD-based SBI with an explicit observational forward model, presence of measurement error, and dual sampling variability, a setting that is not directly covered by existing results in the kernel ABC or generalized Bayesian literature.

\subsection{Convergence and Consistency of Pseudo-posterior}

For fixed $\tau>0$, define the population pseudo-posterior
\begin{equation}
\pi_\tau(d\theta)
=
\frac{p(\theta)\exp\{-\Delta(\theta)/(2\tau^2)\}\,d\theta}
{\int_\Theta p(\vartheta)\exp\{-\Delta(\vartheta)/(2\tau^2)\}\,d\vartheta}.
\label{eq:population-posterior}
\end{equation}
Likewise, define the empirical target
\begin{equation}
\pi_{n,M,\tau}(d\theta)
=
\frac{p(\theta)\exp\{-\Delta_{n,M}(\theta)/(2\tau^2)\}\,d\theta}
{\int_\Theta p(\vartheta)\exp\{-\Delta_{n,M}(\vartheta)/(2\tau^2)\}\,d\vartheta}.
\label{eq:empirical-target}
\end{equation}
The weighted sample $\Pi_{n_\theta}^{\mathrm{MMD}}$ is a self-normalized importance-sampling approximation to $\pi_{n,M,\tau}$.

For a measurable $h:\Theta\to\mathbb R$ integrable with respect to $\pi_{n,M,\tau}$, define
\begin{equation}
\Phi_{n,M,\tau}(h) := \int_\Theta h(\theta)\,\pi_{n,M,\tau}(d\theta).
\end{equation}
Under Assumptions~\ref{ass:kernel} and~\ref{ass:cont}, the map $\theta\mapsto \Delta(\theta)$ is continuous. Weak continuity of $P_\theta^{\mathrm{obs}}$ and bounded continuity of $k$ imply continuity of all three expectation terms in the expansion Equation~\eqref{eq:mmd-expanded}. Since $\Theta$ is compact, $\Delta$ attains its minimum and $\Theta^\dagger$ is compact.

\begin{theorem}[Monte Carlo consistency]
\label{thm:mc}
Fix $n$, $M$, and $\tau>0$. Let $\theta_1,\dots,\theta_{n_\theta}\stackrel{\mathrm{iid}}{\sim}p(\theta)$, and suppose
$\Delta_{n,M}:\Theta\to[0,\infty)$ is measurable. Let $h:\Theta\to\mathbb R$ be measurable and suppose $\int_\Theta |h(\theta)|\,\pi_{n,M,\tau}(d\theta)<\infty$.
Then $\sum_{j=1}^{n_\theta} w_j h(\theta_j)
\xrightarrow[n_\theta\to\infty]{\mathrm{a.s.}}
\Phi_{n,M,\tau}(h)$.
\end{theorem}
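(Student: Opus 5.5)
Since $n$, $M$ and $\tau$ are fixed, I will treat the function $\Delta_{n,M}$ as deterministic: condition on the observed sample and on the simulated batches, so that the only remaining randomness is in $\theta_1,\dots,\theta_{n_\theta}$. One modeling point has to be fixed at the outset. The claim only makes sense if $\Delta_{n,M}(\theta_j)$ is the evaluation at $\theta_j$ of one fixed measurable function. So I either take the simulated batch to be a measurable function of $\theta$ and common random numbers, or I condition on the simulator seeds in the way the statement's hypothesis implicitly does. With that in place, define $g(\theta)=\exp\{-\Delta_{n,M}(\theta)/(2\tau^2)\}$. This function is measurable and satisfies $0<g\le 1$, because $\Delta_{n,M}\ge 0$ and is finite. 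The weighted sum can then be written as a ratio of two sample means:
\begin{equation*}
\sum_{j=1}^{n_\theta} w_j h(\theta_j)=\frac{n_\theta^{-1}\sum_{j} g(\theta_j)h(\theta_j)}{n_\theta^{-1}\sum_{j} g(\theta_j)}.
\end{equation*}

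\textbf{Denominator.} I apply Kolmogorov's strong law of large numbers to the i.i.d. variables $g(\theta_j)$. These are bounded, so the sample mean converges almost surely to $Z:=\int_\Theta g\,p\,d\theta$. Moreover $Z\in(0,1]$, since $g>0$ everywhere and $p$ is a probability density.

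\textbf{Numerator.} I need integrability of $g h$ under the prior. By the definition of $\pi_{n,M,\tau}$,
\begin{equation*}
\int |h|\,g\,p\,d\theta = Z\int |h|\,d\pi_{n,M,\tau}<\infty,
\end{equation*}
and the right-hand side is finite by the hypothesis on $h$. The strong law therefore gives almost sure convergence of the numerator to $\int h g p\,d\theta = Z\,\Phi_{n,M,\tau}(h)$.

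\textbf{Conclusion.} The two almost sure events intersect in a probability-one event. On that event the denominator limit is nonzero, so by continuity of $(a,b)\mapsto a/b$ at $b=Z>0$ the ratio converges almost surely to $Z\Phi_{n,M,\tau}(h)/Z=\Phi_{n,M,\tau}(h)$.

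If the simulated batches are instead drawn independently per particle, the same argument applies to the i.i.d. pairs $(\theta_j,\text{batch}_j)$. In that case the limit is the version of $\pi_{n,M,\tau}$ averaged over the batch randomness, and under the fixed-function reading above the two coincide.

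The analysis itself is straightforward. The main obstacle is this measurability and conditioning issue, namely making precise which object $\Delta_{n,M}$ is when it is evaluated at random $\theta_j$.
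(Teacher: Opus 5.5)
Your proposal is correct and is essentially the paper's proof: you write $\sum_j w_j h(\theta_j)$ as a ratio of prior sample means of $g h$ and $g$, apply the strong law to each, and use $\int_\Theta g\,p\,d\theta>0$. Your identity $\int|h|\,g\,p\,d\theta=Z\int|h|\,d\pi_{n,M,\tau}$ makes explicit the numerator-integrability step that the paper only asserts. Your remark that the theorem treats $\Delta_{n,M}$ as one fixed function is also a fair caveat the paper does not address: Algorithm~\ref{alg:OASIS} draws a fresh batch per particle, and in that case the limit is the batch-averaged pseudo-posterior.
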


Theorem~\ref{thm:mc} shows that for fixed data, simulation batch size, and temperature, the weighted particle approximation is asymptotically exact as the number of parameter draws increases. We next show that if the empirical discrepancy converges uniformly to the population discrepancy, then the empirical pseudo-posterior approaches the population pseudo-posterior for fixed $\tau$.

\begin{theorem}\label{thm:fixedtau}
Suppose Assumptions~\ref{ass:param}--\ref{ass:uniform} hold, and let $\tau>0$ be fixed. Then for every bounded continuous $h:\Theta\to\mathbb R$, $\int_\Theta h(\theta)\,\pi_{n,M,\tau}(d\theta)
 \xrightarrow[n,M\to\infty]{P} 
\int_\Theta h(\theta)\,\pi_\tau(d\theta)$.
\end{theorem}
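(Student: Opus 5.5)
The plan is to reduce the statement to the uniform convergence in probability $\sup_{\theta\in\Theta}|\Delta_{n,M}(\theta)-\Delta(\theta)|\to 0$ as $n,M\to\infty$, which Assumption~\ref{ass:uniform} provides. Then we transfer this to the Gibbs weights by a deterministic perturbation bound. Write $g(\theta)=\exp\{-\Delta(\theta)/(2\tau^2)\}$ and $g_{n,M}(\theta)=\exp\{-\Delta_{n,M}(\theta)/(2\tau^2)\}$, and set $\varepsilon_{n,M}:=\sup_\theta|\Delta_{n,M}(\theta)-\Delta(\theta)|$.

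First I will record that both losses are bounded. Since $k$ is bounded by some $K$ (Assumption~\ref{ass:kernel}), every MMD$^2$ lies in $[0,4K]$. Hence $g,g_{n,M}\in[e^{-2K/\tau^2},1]$. Consequently the normalizers $Z=\int p\,g$ and $Z_{n,M}=\int p\,g_{n,M}$ are bounded below by $c:=e^{-2K/\tau^2}>0$, because $p$ is a probability density. This makes both $\pi_\tau$ and $\pi_{n,M,\tau}$ well defined, and every integral of a bounded $h$ finite.

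Next I will use the elementary inequality $|e^{-a}-e^{-b}|\le |a-b|$ for $a,b\ge0$, which gives $\sup_\theta|g_{n,M}(\theta)-g(\theta)|\le \varepsilon_{n,M}/(2\tau^2)$. Then
\begin{equation*}
\Bigl|\int h\,d\pi_{n,M,\tau}-\int h\,d\pi_\tau\Bigr|
\le \frac{\bigl|\int h p (g_{n,M}-g)\bigr|}{Z_{n,M}} + \Bigl|\int h p g\Bigr|\,\Bigl|\frac{1}{Z_{n,M}}-\frac1Z\Bigr|
\le \frac{\|h\|_\infty}{c}\cdot\frac{\varepsilon_{n,M}}{2\tau^2}+\|h\|_\infty\frac{|Z_{n,M}-Z|}{c^2},
\end{equation*}
and $|Z_{n,M}-Z|\le \varepsilon_{n,M}/(2\tau^2)$. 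So the difference is bounded by $C_{\tau,K}\|h\|_\infty\,\varepsilon_{n,M}$, a deterministic multiple of $\varepsilon_{n,M}$. Convergence in probability then follows immediately from $\varepsilon_{n,M}\to0$ in probability. Continuity of $h$ is not even needed; boundedness and measurability suffice. I would remark that this in fact yields total-variation convergence.

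The main obstacle is measurability rather than analysis. $\Delta_{n,M}$ is a random function of $\theta$ built from the simulated batches, and $\varepsilon_{n,M}$ must be a random variable, or at least the bound must be handled with outer probability. I would invoke continuity of $\Delta$ (established above from Assumptions~\ref{ass:kernel} and \ref{ass:cont}) together with the measurability of $\Delta_{n,M}$ assumed in Assumption~\ref{ass:uniform}/Theorem~\ref{thm:mc}, restricting the sup to a countable dense subset if needed. If the uniform assumption is stated only in outer probability, the same deterministic bound transfers the convergence verbatim.
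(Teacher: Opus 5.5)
Your proof is correct and follows the paper's route. Like the paper, you use the Lipschitz bound for $x\mapsto e^{-x/(2\tau^2)}$ on $[0,\infty)$ to turn Assumption~\ref{ass:uniform} into uniform convergence of the Gibbs weights, and then control the ratio of integrals. The only difference is that you bound both normalizers below by $e^{-2\|k\|_\infty/\tau^2}$ using boundedness of $k$, where the paper applies Slutsky with a positive limiting denominator; this gives an explicit $O(\varepsilon_{n,M})$ bound, uniform over $\|h\|_\infty\le 1$, and hence the total-variation convergence you note.
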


The preceding result is the fixed-temperature analogs of posterior consistency \citep{walker2004modern}. The empirical pseudo-posterior inherits the limiting distribution that would have been obtained had the exact population discrepancy $\Delta(\theta)$ been known.

\subsection{Identifiability and Posterior Concentration} 

We now turn to the asymptotic regime in which $n\to\infty$, $M\to\infty$, and the temperature $\tau_{n,M}\downarrow0$. The relevant target is the identified set
\begin{equation}
\Theta^\dagger
=
\arg\min_{\theta\in\Theta}\Delta(\theta).
\end{equation}
Since the exponential weights favor smaller values of $\Delta(\theta)$, one expects $\pi_{n,M,\tau_{n,M}}$ to concentrate around $\Theta^\dagger$ provided the empirical loss approximates $\Delta$ finely enough relative to the shrinking temperature. The next theorem formalizes this.

\begin{theorem}[Posterior Consistency]
\label{thm:consistency}
Suppose Assumptions~\ref{ass:param}--\ref{ass:rate} hold. Let $\tau_{n,M}\downarrow0$. Then for every open neighborhood $U\supset\Theta^\dagger$,
\begin{equation}
\pi_{n,M,\tau_{n,M}}(U)
 \xrightarrow[n,M\to\infty]{P} 1.
\label{eq:post-consistency}
\end{equation}
Equivalently, for every closed set $F\subset\Theta$ satisfying $F\cap\Theta^\dagger=\varnothing$,
\begin{equation}
\pi_{n,M,\tau_{n,M}}(F)
 \xrightarrow[n,M\to\infty]{P} 0.
\label{eq:closed-consistency}
\end{equation}
\end{theorem}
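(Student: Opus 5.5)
We argue through the standard ratio bound for Gibbs posteriors, and we work throughout with the closed-set form \eqref{eq:closed-consistency}. The two forms are equivalent. If $U\supset\Theta^\dagger$ is open, then $F=\Theta\setminus U$ is closed, hence compact since $\Theta$ is compact, and $F\cap\Theta^\dagger=\varnothing$.

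Write $\Delta^\ast=\min_\Theta\Delta$ and $\varepsilon_{n,M}=\sup_{\theta\in\Theta}|\Delta_{n,M}(\theta)-\Delta(\theta)|$. The minimum $\Delta^\ast$ is attained because $\Delta$ is continuous on compact $\Theta$. Assumption~\ref{ass:uniform} gives $\varepsilon_{n,M}\to0$ in probability. We read Assumption~\ref{ass:rate} as the temperature condition $\varepsilon_{n,M}/\tau_{n,M}^2\to0$ in probability; this is the "slow enough" requirement mentioned in the text.

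\textbf{Step 1 (gap on $F$).} Since $F$ is compact, $\Delta$ is continuous, and $\Delta>\Delta^\ast$ on $F$, there is a gap
\[
\eta:=\min_F\Delta-\Delta^\ast>0 .
\]

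\textbf{Step 2 (prior mass near the minimizers).} By continuity, the set $V=\{\theta:\Delta(\theta)<\Delta^\ast+\eta/4\}$ is open and nonempty. Since $\Theta^\dagger\subset V$, it contains a neighborhood of a minimizer. Assumption~\ref{ass:param} should supply a prior density that is positive near $\Theta^\dagger$ (or positive prior mass on open sets meeting $\Theta$). This gives $p_V:=\int_V p>0$.

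\textbf{Step 3 (ratio bound).} The common factor $\exp\{\Delta^\ast/(2\tau^2)\}$ cancels in the ratio, so
\[
\pi_{n,M,\tau}(F)\le\frac{\int_F p\,e^{-\Delta_{n,M}/(2\tau^2)}}{\int_V p\,e^{-\Delta_{n,M}/(2\tau^2)}} .
\]
On $F$ we have $\Delta_{n,M}\ge\Delta^\ast+\eta-\varepsilon_{n,M}$. On $V$ we have $\Delta_{n,M}\le\Delta^\ast+\eta/4+\varepsilon_{n,M}$. Substituting these bounds yields
\[
\pi_{n,M,\tau_{n,M}}(F)\le p_V^{-1}\exp\Bigl\{-\frac{3\eta/4-2\varepsilon_{n,M}}{2\tau_{n,M}^2}\Bigr\}.
\]
On the event $\{\varepsilon_{n,M}\le\eta/8\}$, whose probability tends to one, the exponent is at most $-\eta/(4\tau_{n,M}^2)\to-\infty$. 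The bound therefore tends to $0$, and \eqref{eq:closed-consistency} follows.

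\textbf{Main obstacle.} This argument only needs $\varepsilon_{n,M}\to0$; once that holds, the rate condition is automatic. The delicate points are elsewhere.

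First, we need positive prior mass in every sublevel neighborhood of $\Theta^\dagger$. If Assumption~\ref{ass:param} gives this only near $\theta^\star$, we apply Step 2 at that point.

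Second, we need uniform convergence to hold jointly over the simulation randomness. The batch $\widehat P_{\theta,M}^{\mathrm{sim,obs}}$ is generated separately for each $\theta$, so $\Delta_{n,M}$ is a random field. If Assumption~\ref{ass:uniform} were only pointwise, we would first try to recover uniform control from the MMD expansion. That expansion bounds $|\Delta_{n,M}-\Delta|\le 4\sqrt{K}\bigl(\|\mu_k(\widehat P_{\mathrm{obs}})-\mu_k(P_{\mathrm{obs}})\|+\|\mu_k(\widehat P_{\theta,M})-\mu_k(P^{\mathrm{obs}}_\theta)\|\bigr)$, where $K=\sup k$. The first term is $O_P(n^{-1/2})$. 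For the second, we would use a common-random-numbers representation $y=G_\theta(\xi)$ with equicontinuity in $\theta$, together with a finite-net argument over compact $\Theta$.

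If instead Assumption~\ref{ass:rate} is stated as a high-probability bound $\varepsilon_{n,M}=O_P(r_{n,M})$ with $r_{n,M}/\tau^2_{n,M}\to0$, then Step 3 works verbatim with $\varepsilon_{n,M}$ replaced by $r_{n,M}$.
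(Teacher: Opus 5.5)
Your proof is correct and has the same skeleton as the paper's: a gap $\eta$ off $U$, a sublevel set $V$ of positive prior mass, and a ratio bound on a high-probability event. It differs in two ways worth noting.

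First, you derive the gap from compactness of $F=\Theta\setminus U$ and continuity of $\Delta$, instead of invoking Assumption~\ref{ass:ident}. This is legitimate, and it shows that assumption is redundant here; the case $F=\varnothing$ is trivial.

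Second, and more substantively, the paper works on the event $\{\varepsilon_{n,M}\le(\eta/8)\tau_{n,M}^2\}$. Its probability tends to one only because of Assumption~\ref{ass:rate}. The $\tau^2$-scaling merely turns the error terms into the harmless constant $e^{\eta/8}$ in the paper's bound $c_V^{-1}\exp\{-\eta/(4\tau_{n,M}^2)+\eta/8\}$. You work on the larger event $\{\varepsilon_{n,M}\le\eta/8\}$, whose probability tends to one under Assumption~\ref{ass:uniform} alone, and you still get $p_V^{-1}e^{-\eta/(4\tau_{n,M}^2)}\to0$. So your route proves the theorem without Assumption~\ref{ass:rate}, and without the verification in Appendix~\ref{sec:rate-justification}. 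A condition like $\varepsilon_{n,M}=o_P(\tau_{n,M}^2)$ is needed only for stronger statements. Examples are total-variation closeness of $\pi_{n,M,\tau_{n,M}}$ to $\pi_{\tau_{n,M}}$, or concentration on neighborhoods that shrink with $n,M$. It is not needed for a fixed $U$.

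Three minor points:
\begin{itemize}
\item Say ``the rate condition is not used'' rather than ``automatic,'' since $\varepsilon_{n,M}\to0$ does not imply $\varepsilon_{n,M}=o_P(\tau_{n,M}^2)$.
\item Assumption~\ref{ass:param} already makes $p$ strictly positive on all of $\Theta$, so your hedge in Step~2 is unnecessary. However, you and the paper both tacitly need relatively open subsets of $\Theta$ to have positive Lebesgue measure to get $p_V>0$.
\item Your closing discussion of how to obtain uniform convergence is outside the theorem, since Assumption~\ref{ass:uniform} is a hypothesis.
\end{itemize}
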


Theorem~\ref{thm:consistency} is the main posterior-consistency result. It states that the empirical pseudo-posterior places asymptotically all of its mass on arbitrary neighborhoods of the minimizers of the population MMD loss. In particular, when the simulator is correctly specified and the kernel is characteristic, the pseudo-posterior concentrates on the set of parameter values whose induced observed-data distributions coincide with the true observed law. This theorem relies on Assumption~\ref{ass:rate}. In Appendix~\ref{sec:rate-justification}, we provide justification for this assumption. 

The previous theorem is stated relative to the minimizer set $\Theta^\dagger$. We now study the consequences under correct specification. Implications under misspecification are outside the scope of this work.

\begin{proposition}
\label{prop:zero}
Under Assumptions~\ref{ass:kernel} and~\ref{ass:correct}, $\inf_{\theta\in\Theta}\Delta(\theta)=0$ and
$\theta^\star\in\Theta^\dagger$.
\end{proposition}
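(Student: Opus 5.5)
The plan is to exhibit a parameter at which the discrepancy vanishes and combine this with nonnegativity of $\Delta$. We have not seen Assumption~\ref{ass:correct}, but it presumably states correct specification: there exists $\theta^\star\in\Theta$ with $P_{\theta^\star}^{\mathrm{obs}}=P_{\mathrm{obs}}$, where $P_{\theta^\star}^{\mathrm{obs}}$ is the law in Equation~\eqref{eq:obs-law} obtained by pushing $P_{\theta^\star}^{\mathrm{true}}$ through the observation kernel $P^{\rm Err}_{\theta^\star}$.

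\emph{Step 1: $\Delta$ is well defined and nonnegative.} Assumption~\ref{ass:kernel} gives a bounded (measurable) positive-definite kernel $k$. Hence the mean embedding $\mu_k(P)=\int k(y,\cdot)\,P(dy)$ is a well-defined Bochner integral in $\mathcal H_k$ for every probability measure $P$, since $\|k(y,\cdot)\|_{\mathcal H_k}=\sqrt{k(y,y)}$ is bounded. Consequently $\Delta(\theta)=\|\mu_k(P_{\mathrm{obs}})-\mu_k(P_\theta^{\mathrm{obs}})\|_{\mathcal H_k}^2$ is finite and satisfies $\Delta(\theta)\ge 0$ for all $\theta\in\Theta$. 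It follows that $\inf_{\theta\in\Theta}\Delta(\theta)\ge 0$.

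\emph{Step 2: $\Delta(\theta^\star)=0$.} By correct specification, $\mu_k(P_{\theta^\star}^{\mathrm{obs}})=\mu_k(P_{\mathrm{obs}})$ because the two measures are identical. Therefore $\Delta(\theta^\star)=0$. This step needs only equality of the measures, not the characteristic property of $k$, which would be required only for the converse direction (identification). Combined with Step 1, this gives $0\le\inf_\Theta\Delta\le\Delta(\theta^\star)=0$, so $\inf_\Theta\Delta=0$.

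\emph{Step 3: $\theta^\star\in\Theta^\dagger$.} Since $\Delta(\theta^\star)=0=\inf_\Theta\Delta$, the infimum is attained at $\theta^\star$, and so $\theta^\star\in\arg\min_\Theta\Delta=\Theta^\dagger$. This needs neither compactness nor continuity; those are used earlier in the paper only to guarantee $\Theta^\dagger\neq\varnothing$ in general.

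The only point needing care is making sure the statement of Assumption~\ref{ass:correct} really yields equality of the \emph{observed-data} laws, including the same selection and error mechanism. If instead it is phrased at the latent level, $P_{\mathrm{obs}}$ generated from $P_{\theta^\star}^{\mathrm{true}}$ through the same $P^{\rm Err}_{\theta^\star}$, then one extra step is needed. I would verify it via Equation~\eqref{eq:obs-law}: the mixture $\int P^{\rm Err}_{\theta^\star}(\cdot\mid u)\,P_{\theta^\star}^{\mathrm{true}}(du)$ is then exactly $P_{\mathrm{obs}}$, so the two observed-data laws coincide. Beyond this, the argument is routine.
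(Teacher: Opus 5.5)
Your proof is correct and follows the paper's argument: Assumption~\ref{ass:correct} is stated exactly as $P_{\theta^\star}^{\mathrm{obs}}=P_{\mathrm{obs}}$, which gives $\Delta(\theta^\star)=\mathrm{MMD}_k^2(P_{\mathrm{obs}},P_{\mathrm{obs}})=0$, and nonnegativity of $\Delta$ then yields $\inf_{\Theta}\Delta=0$ and $\theta^\star\in\Theta^\dagger$. The only difference is that the paper cites Theorem~3.9 of \cite{vashistha2026convolutional} for $\mathrm{MMD}_k^2(P_{\mathrm{obs}},P_{\mathrm{obs}})=0$, whereas you rightly note that this is immediate from equal mean embeddings and does not need the characteristic property.
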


The next result characterizes point identification. The set $\Theta^\dagger$ may nonetheless be non-singleton, reflecting observational non-identifiability after measurement and selection.

\begin{proposition}[Identifiability]
\label{prop:point}
Assume Assumptions~\ref{ass:kernel} and~\ref{ass:correct}. If the map $\theta\longmapsto P_\theta^{\mathrm{obs}}$ is injective on $\Theta$, then $\Theta^\dagger=\{\theta^\star\}$. Consequently, under the assumptions of Theorem~\ref{thm:consistency}, for every open neighborhood $U$ of $\theta^\star$, $\pi_{n,M,\tau_{n,M}}(U)\xrightarrow{P}1$.
\end{proposition}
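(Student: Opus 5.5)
The plan is to combine Proposition~\ref{prop:zero} with the fact that a characteristic kernel makes MMD a metric on probability measures, and then appeal to Theorem~\ref{thm:consistency}. Assumption~\ref{ass:correct} will presumably say that $P_{\mathrm{obs}}=P_{\theta^\star}^{\mathrm{obs}}$ for some $\theta^\star\in\Theta$. Assumption~\ref{ass:kernel} will supply a bounded, characteristic kernel~$k$.

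\textbf{Step 1: the identified set is a singleton.} By Proposition~\ref{prop:zero}, $\min_{\Theta}\Delta=0$ and $\theta^\star\in\Theta^\dagger$. Hence
\[
\Theta^\dagger=\{\theta\in\Theta:\Delta(\theta)=0\}.
\]
Take any $\theta\in\Theta^\dagger$. Then $\|\mu_k(P_{\theta^\star}^{\mathrm{obs}})-\mu_k(P_\theta^{\mathrm{obs}})\|_{\mathcal H_k}=0$. Because $k$ is characteristic, the mean embedding $P\mapsto\mu_k(P)$ is injective on Borel probability measures. It follows that $P_\theta^{\mathrm{obs}}=P_{\theta^\star}^{\mathrm{obs}}$. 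Injectivity of $\theta\mapsto P_\theta^{\mathrm{obs}}$ then forces $\theta=\theta^\star$, so $\Theta^\dagger=\{\theta^\star\}$.

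\textbf{Step 2: concentration.} Any open neighborhood $U$ of $\theta^\star$ is an open set containing $\Theta^\dagger=\{\theta^\star\}$. Theorem~\ref{thm:consistency}, which holds under its own assumptions, therefore gives $\pi_{n,M,\tau_{n,M}}(U)\xrightarrow{P}1$ directly.

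\textbf{Main obstacle.} The only delicate point is well-definedness: the MMD must be finite and the characteristic property must apply to the observed-data laws. Boundedness of $k$ in Assumption~\ref{ass:kernel} makes $\mu_k(P)$ a well-defined Bochner integral for every probability measure. If the paper's definition of characteristic is restricted to a subclass of measures, I would check that every $P_\theta^{\mathrm{obs}}$ lies in that class. This should hold because these are Borel probability measures on $\mathbb R^d$.

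If selection effects make $P_\theta^{\mathrm{obs}}$ sub-probability measures, the plan needs one more step. I would normalize these measures first, or rely on the fact that kernels such as the Gaussian are characteristic for finite signed measures (integrally strictly positive definite). Either route makes the injectivity argument in Step~1 go through unchanged.
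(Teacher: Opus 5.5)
Your proposal is correct and follows the paper's proof essentially line for line. Proposition~\ref{prop:zero} gives $\min_\Theta \Delta = 0$ with $\theta^\star\in\Theta^\dagger$, the characteristic kernel turns $\Delta(\theta)=0$ into $P_\theta^{\mathrm{obs}}=P_{\mathrm{obs}}=P_{\theta^\star}^{\mathrm{obs}}$, injectivity forces $\theta=\theta^\star$, and Theorem~\ref{thm:consistency} is then applied to any open $U\ni\theta^\star$. Your sub-probability caveat is not needed, because the paper treats each $P_\theta^{\mathrm{obs}}$ as a probability law on $\mathbb R^d$ from which simulated batches are drawn i.i.d.
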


Proposition~\ref{prop:point} clarifies the sense in which the MMD loss can deliver stronger identification than summary-based methods \citep[e.g.,][]{farahi2026simulation}. If the induced observed-data law uniquely determines the parameter, then the pseudo-posterior is asymptotically point-concentrated. If not, the method still enjoys set consistency and concentrates on the observationally indistinguishable region.

\subsection{Joint Asymptotics with the Monte Carlo Approximation} 

The final result combines the two sources of approximation: finite parameter sampling and finite data or simulation sample sizes.

\begin{corollary}\label{cor:two-stage}
Let $h:\Theta\to\mathbb R$ be bounded and continuous, and let $n_\theta\to\infty$. First, fix $\tau>0$. Under the assumptions of Theorems~\ref{thm:mc} and~\ref{thm:fixedtau},
\begin{equation}
\sum_{j=1}^{n_\theta} w_j h(\theta_j)
-
\int_\Theta h(\theta)\,\pi_{n,M,\tau}(d\theta)
\xrightarrow[n_\theta\to\infty]{\mathrm{a.s.}}0
\label{eq:two-stage-mc}
\end{equation}
for each fixed $(n,M,\tau)$, and
\begin{equation}
\sum_{j=1}^{n_\theta} w_j h(\theta_j)
-
\int_\Theta h(\theta)\,\pi_{\tau}(d\theta)
\xrightarrow[n,M\to\infty]{P}0.
\label{eq:two-stage-fixedtau}
\end{equation}

Second, let $\tau_{n,M}\downarrow 0$. Under the assumptions of Theorems~\ref{thm:mc} and~\ref{thm:consistency}, for every open neighborhood $U\supset\Theta^\dagger$,
\begin{equation}
\Pi_{n_\theta}^{\mathrm{MMD}}(U)\xrightarrow[n,M\to\infty]{P}1.
\label{eq:two-stage-concentration}
\end{equation}
\end{corollary}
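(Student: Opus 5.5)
The corollary splits into three claims, and I will prove them in the order stated, each time splitting the particle estimator into a Monte Carlo part and a statistical part. For \eqref{eq:two-stage-mc} nothing new is needed. A bounded $h$ is integrable with respect to $\pi_{n,M,\tau}$, and $\Delta_{n,M}$ is measurable in $\theta$ once the simulated batches are generated, so Theorem~\ref{thm:mc} applies verbatim conditionally on the data and simulations and gives the almost sure limit. This holds for each fixed $(n,M,\tau)$.

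For \eqref{eq:two-stage-fixedtau} I use the triangle inequality
\[
\Bigl|\sum_{j} w_j h(\theta_j)-\int h\,d\pi_\tau\Bigr|
\le
\Bigl|\sum_{j} w_j h(\theta_j)-\int h\,d\pi_{n,M,\tau}\Bigr|
+
\Bigl|\int h\,d\pi_{n,M,\tau}-\int h\,d\pi_\tau\Bigr|.
\]
The second term tends to zero in probability as $n,M\to\infty$ by Theorem~\ref{thm:fixedtau}. The first term tends to zero as $n_\theta\to\infty$ by the previous step. The statement is therefore read as an iterated limit: first $n_\theta\to\infty$ for fixed $(n,M)$, then $n,M\to\infty$. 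Equivalently, it holds along any sequence $n_\theta=n_\theta(n,M)$ growing fast enough, which a diagonal argument can select.

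For \eqref{eq:two-stage-concentration} I cannot apply Theorem~\ref{thm:mc} directly with $h=\mathbf 1_U$, since the indicator is not continuous. Theorem~\ref{thm:mc} only needs measurability and integrability, though, so $h=\mathbf 1_U$ is admissible. It gives $\Pi^{\mathrm{MMD}}_{n_\theta}(U)\to\pi_{n,M,\tau_{n,M}}(U)$ almost sure as $n_\theta\to\infty$, for each fixed $(n,M)$ with $\tau=\tau_{n,M}$ held fixed. Theorem~\ref{thm:consistency} then gives $\pi_{n,M,\tau_{n,M}}(U)\to1$ in probability. Combining the two as above yields the claim, again in the iterated sense (or with $n_\theta(n,M)$ growing fast enough).

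The main obstacle is making the order of limits rigorous, since the corollary lets $n_\theta\to\infty$ and also lists $n,M\to\infty$. Theorem~\ref{thm:mc} is pointwise in $(n,M)$, and as $\tau_{n,M}\downarrow0$ the self-normalized weights degenerate, so no uniform rate in $n_\theta$ is available. I would state the result as an iterated limit. For a joint-limit version, I would choose $n_\theta(n,M)$ by a diagonal argument so that the conditional probability $P\bigl(|\Pi_{n_\theta}^{\mathrm{MMD}}(U)-\pi_{n,M,\tau_{n,M}}(U)|>\epsilon \,\big|\, \text{data, sims}\bigr)$ is small. That probability is controlled by dominated convergence applied to the almost sure convergence from Theorem~\ref{thm:mc}.
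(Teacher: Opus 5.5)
Your proposal is correct and follows the same route as the paper. You use the same split into a Monte Carlo error and a statistical error, apply Theorem~\ref{thm:mc} with bounded $h$ and with $h=\mathbf 1_U$ (measurability and integrability are all it needs), and handle the statistical part with Theorems~\ref{thm:fixedtau} and~\ref{thm:consistency}.

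Your handling of the order of limits, either as an iterated limit or with $n_\theta=n_\theta(n,M)$ chosen large enough via dominated convergence of the conditional probabilities, is more careful than the paper's. The paper simply asserts that the pointwise-in-$(n,M)$ almost sure convergence gives convergence in probability ``along any sequence with $n_\theta\to\infty$'' and then invokes Slutsky. That step does not by itself control a joint limit when $\tau_{n,M}\downarrow 0$ makes the weights degenerate.
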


The asymptotic behavior follows a two-stage structure. For fixed temperature $\tau>0$, the weighted sample $\Pi_{n_\theta}^{\mathrm{MMD}}$ provides a consistent Monte Carlo approximation to the empirical pseudo-posterior $\pi_{n,M,\tau}$ as $n_\theta\to\infty$, and as $n,M\to\infty$, this empirical pseudo-posterior converges to the population target $\pi_\tau$, which has the form of a Gibbs posterior with loss $\Delta(\theta)$ \citep{bissiri2016general}. In the vanishing-temperature regime $\tau_{n,M}\downarrow0$ and correctly specified setting, the empirical pseudo-posterior concentrates on the true $\theta^\star$, and the same concentration holds for its Monte Carlo approximation, implying that $\Pi_{n_\theta}^{\mathrm{MMD}}$ assigns asymptotically all mass to any neighborhood of $\theta^\star$. 

$\tau$ plays a key role in controlling the concentration and calibration of the resulting pseudo-posterior, and its choice directly impacts empirical performance. In all experiments, we adopt a data-driven heuristic for selecting $\tau$ without tuning (see Appendix~\ref{sec:tau}). This choice adapts to the scale of the discrepancy and the sample size, and was found to yield stable and well-calibrated posterior approximations across a range of settings. While this simple rule performs well in practice (see Appendix~\ref{sec:tau}), the sensitivity of the pseudo-posterior to $\tau$ suggests that more principled or adaptive calibration strategies could further improve performance. Developing a scalable method, for example, based on uncertainty quantification or target coverage properties, remains an open problem for future work.

\section{Benchmarking} \label{sec:benchmarking}

\paragraph{Linear Regression.} We evaluate the proposed method under a controlled simulation setting with heteroscedastic measurement error in both covariates and responses, where latent covariates are drawn from a Gaussian mixture distribution, intrinsic scatter is Gaussian, and measurement error with heteroscedastic Gaussian, Laplace, and Iniform noises. A detailed description of the data-generating process, observation model, and inference procedures is provided in Appendix~\ref{app:EIV_linear}. We compare against a range of standard errors-in-variables baselines, including {\bf LinMix} \cite{kelly2007some}, {\bf Deming regression} \cite{deming1943statistical}, {\bf BCES} \cite{akritas1996linear}, {\bf ODR} \cite{boggs1990orthogonal}, {\bf SIMEX} \cite{cook1994simulation}, and ordinary least squares ({\bf OLS}), covering both hierarchical Bayesian, likelihood-based, and classical correction methods. \vspace{-3mm}

\begin{figure}[hb]
    \centering
    \includegraphics[width=0.90\linewidth]{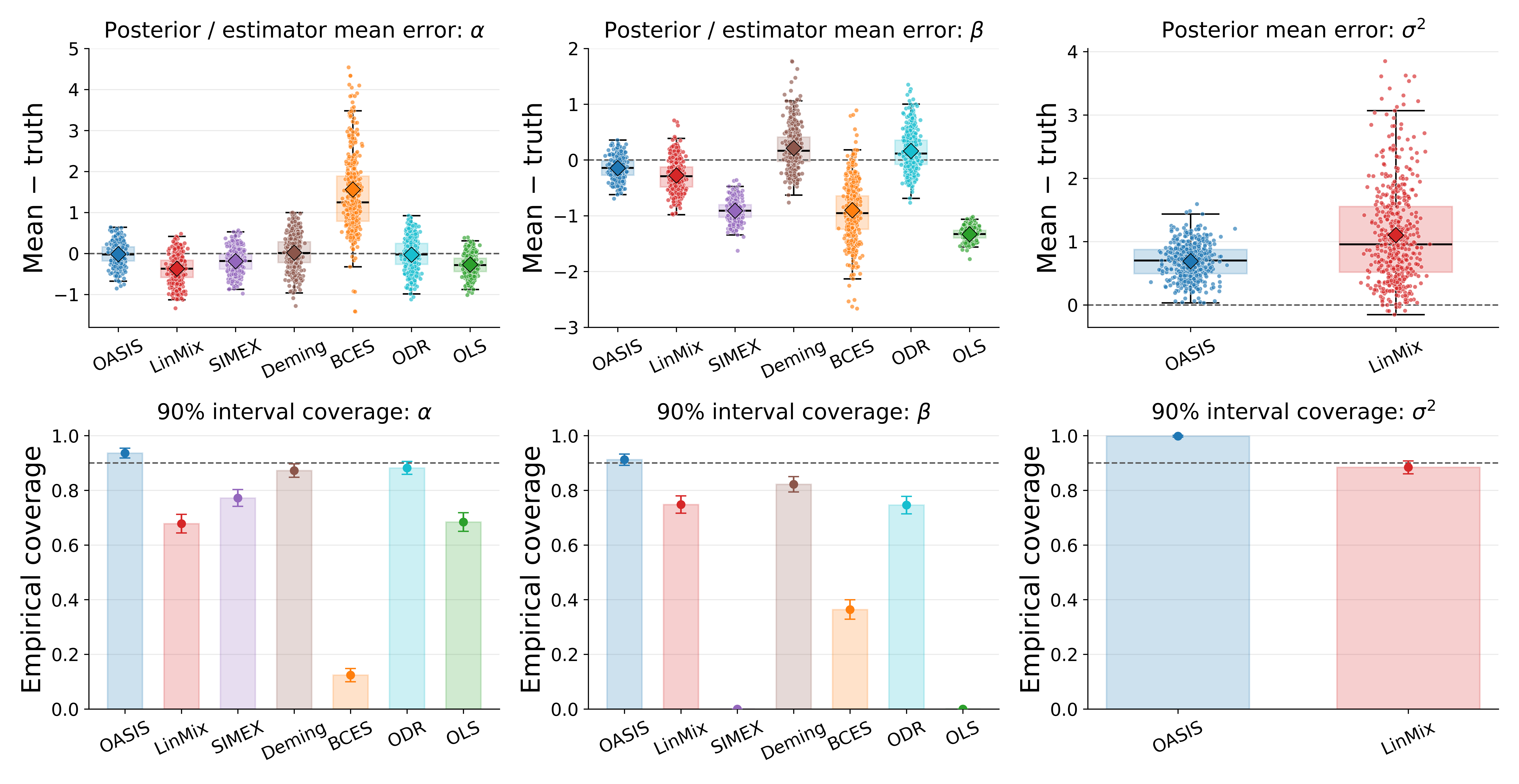}
    \vspace{-3mm}
    \caption{Performance comparison under Gaussian intrinsic noise. \textit{Top row:} distribution of estimation error (posterior or estimator mean minus ground truth) for $\alpha$, $\beta$, and $\sigma^2$ across 500 independent realizations. \textit{Bottom row:} empirical coverage of nominal 90\% uncertainty intervals. The proposed method (\texttt{OASIS}) achieves low bias and near-nominal coverage across parameters, while classical errors-in-variables methods exhibit increased bias and miscalibration under heteroscedastic measurement noise. Full experimental details are provided in Appendix~\ref{app:EIV_linear}.}
    \label{fig:EIV_linear_gaussian}
\end{figure}

Figures~\ref{fig:EIV_linear_gaussian} and ~\ref{fig:EIV_linear_laplace_uniform} summarize performance over repeated realizations in terms of bias and 90\% interval coverage for Gaussian, Laplace, and Uniform measurement errors. The proposed method achieves consistently low bias for both $\alpha$ and $\beta$, while maintaining well-calibrated uncertainty estimates with coverage close to the nominal level. In contrast, classical estimators exhibit varying degrees of bias and miscalibration, particularly under heteroscedastic noise and non-Gaussian covariate structure. Notably, BCES and OLS are highly biased and have poor coverage, while SIMEX and Deming provide partial correction but remain sensitive to noise magnitude and distributional assumptions. LinMix performs competitively when its Gaussian assumptions are satisfied, but shows increased bias in the intrinsic scatter parameter. Overall, these results highlight the robustness of the proposed approach to observational distortions and measurement error distributional assumptions, as it operates directly on observed-data distributions rather than relying on explicit likelihood assumptions. We note that $\tau$ is not tuned in any experiment; we use the heuristic discussed earlier, though with tuning, we expect to see even better results.

\begin{figure}
    \centering
    \includegraphics[width=0.70\linewidth]{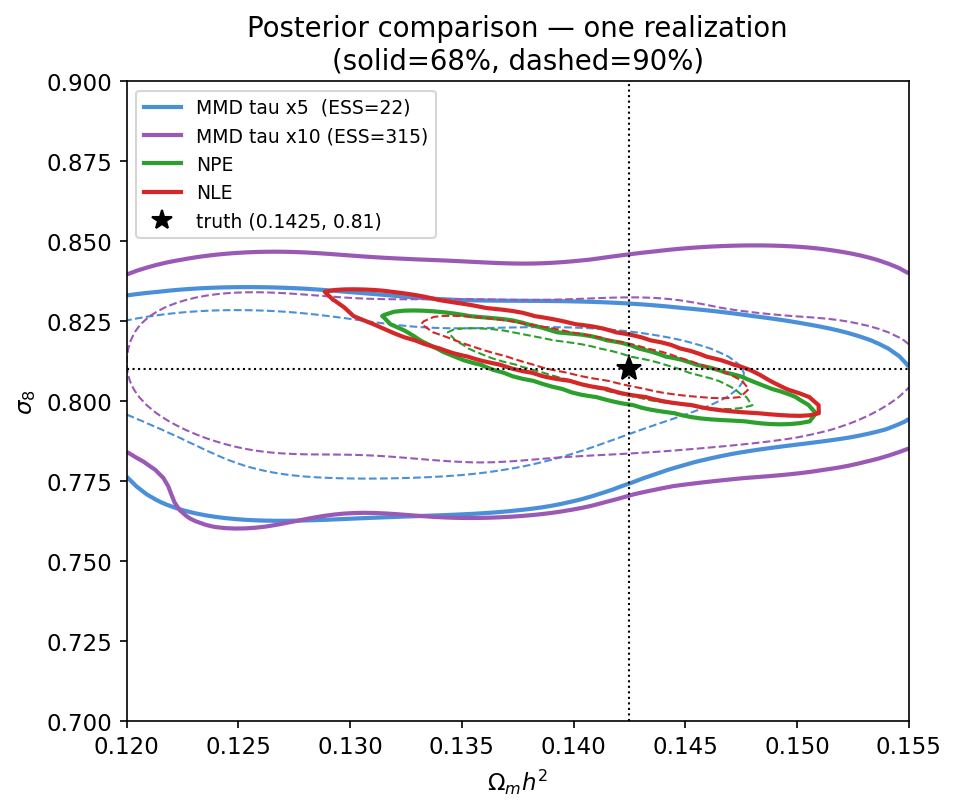}
    \caption{Posterior comparison for a representative realization. Contours show the 68\% and 90\% credible regions for \texttt{OASIS}, NPE, and NLE; the black star marks the true input cosmology.}
    \label{fig:cluster_posterior_main}
\end{figure}

\paragraph{Cluster-cosmology Application.}
We evaluate \texttt{OASIS} on a synthetic multi-wavelength cluster-cosmology benchmark designed to capture the main statistical difficulties of cluster abundance analyses while keeping the inference problem controlled. The simulated cluster survey contains approximately $1{,}700$ clusters over $200\,\mathrm{deg}^2$ and $z\in[0.1,1.0]$, with cosmological parameters restricted to $\theta=(\Omega_m h^2,\sigma_8)$. Halo abundances are generated from the \textsc{MiraTitan} emulator \cite{mcclintock2019aemulus, bocquet2020mira}, and the true cluster population is modeled through multivariate log-linear scaling relations for optical richness, SZe signal-to-noise, weak-lensing mass, and X-ray count rate, including intrinsic scatter and cross-observable correlations \cite{mulroy2019locuss, farahi2019detection}. The observation model includes richness-dependent projection scatter \cite{wu2022optical}, SZe selection \cite{bocquet2024spt}, mass-dependent weak-lensing bias and scatter \cite{grandis2021calibration}, and log-normal X-ray count-rate noise. Selection is modeled through threshold cuts for optical, SZe, and weak-lensing observables, and through a redshift-dependent logistic selection function for X-ray detections. Partial survey overlap is allowed, so missing observables arise from footprint coverage, while non-detections within a footprint are treated as censored data. Full details of the data-generating process are given in Appendix~\ref{app:cluster_dgp}.

\begin{figure}
\centering
    \centering
    \includegraphics[trim={0cm 0cm 0cm 1cm}, clip, width=0.80\linewidth]{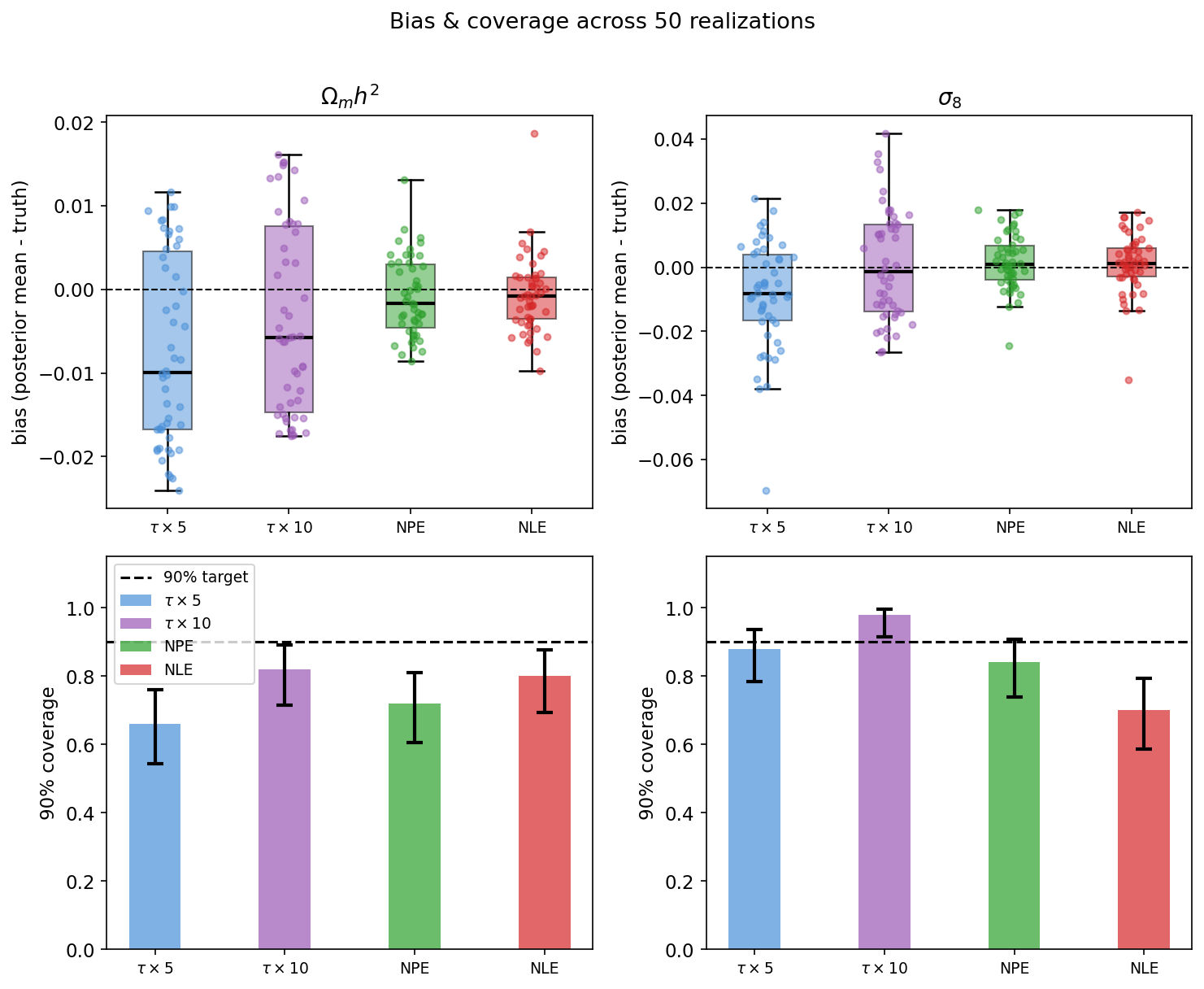}
    \caption{\textit{Top row:} distribution of posterior-mean estimation error across 50 independent realizations for $\Omega_m h^2$ and $\sigma_8$. \textit{Bottom row:} empirical 90\% credible interval coverage with Wilson-score confidence intervals; the dashed line marks nominal 90\% coverage. }
    \label{fig:cosmo_coverage}
\end{figure}

For inference, \texttt{OASIS} draws $2{,}000$ parameter proposals from flat priors $\Omega_m h^2\in[0.12,0.155]$ and $\sigma_8\in[0.70,0.90]$. Each proposal is scored by the squared MMD between observed and simulated five-dimensional observable distributions $(z,\ln\lambda,\xi,\ln\mathcal{M},\ln C_R)$ using an isotropic RBF kernel with median-heuristic bandwidth. We set $\tau=10\hat{\tau}$, where $\hat{\tau}$ is defined in Section~\ref{sec:setup}. We compare against Neural Posterior Estimation (\textbf{NPE}) and Neural Likelihood Estimation (\textbf{NLE}) from the \texttt{sbi} library \citep{tejero2020sbi}, each trained on 5000 simulations using a 50-dimensional hand-designed summary statistic. Performance is evaluated over 50 independent realizations with true cosmologies drawn from the prior.

Figure~\ref{fig:cluster_posterior_main} shows that all methods primarily constrain $\sigma_8$, while $\Omega_m h^2$ remains weakly identified, as expected for cluster abundance in this mass and redshift range. \texttt{OASIS} contains the true parameter values within its 68\% credible contour and agrees broadly with the neural methods along the constrained $\sigma_8$ direction, while maintaining a more conservative contour in $\Omega_m h^2$. Across 50 realizations, Figure~\ref{fig:cosmo_coverage} shows that \texttt{OASIS} has negligible mean bias, with $\Delta\Omega_m h^2=-0.004$ and $\Delta\sigma_8=0.001$, and achieves empirical 90\% coverage of 82\% for $\Omega_m h^2$ and 98\% for $\sigma_8$. NPE and NLE have similarly small bias but lower coverage, indicating overconfident posterior uncertainty rather than systematic displacement. These results suggest that direct distributional comparison with MMD can potentially provide better-calibrated uncertainty quantification in this benchmark without requiring summary-statistic engineering or amortized neural training.

We emphasize that this comparison is not intended to optimize every source of cosmological information available in the simulated survey. NPE and NLE can benefit from hand-crafted summary statistics that explicitly encode both the shape and amplitude of the halo mass function, including cluster-count features \cite{ho2024ltu}. By contrast, the present \texttt{OASIS} implementation compares normalized observable distributions and therefore uses primarily shape information in the induced cluster population, rather than the full abundance normalization. Some information loss is therefore expected, particularly for $\Omega_m h^2$, whose posterior intervals may be inflated relative to methods that directly use abundance information (see Appendix~\ref{app:distribution_vs_abundance} for full argument). The purpose of this experiment is instead to show that \texttt{OASIS} can handle the full complexity of a realistic multi-wavelength cluster-cosmology analysis, including observational noise, correlated observables, and survey selection effects, while still delivering competitive and well-calibrated constraints even under a deliberately suboptimal information representation.

\section{Conclusion} \label{sec:conclusion}

\texttt{OASIS} provides an observation-aware framework for simulation-based inference in scientific problems where the relevant comparison is between observed data and forward-simulated observations, rather than between data and idealized latent simulator outputs. By embedding measurement error, selection effects, and survey-specific transformations directly into the simulation loop, and by replacing handcrafted summaries with an MMD-based distributional discrepancy, \texttt{OASIS} offers a flexible pseudo-posterior construction with theoretical consistency guarantees and empirically well-calibrated uncertainty. The experiments show that this approach is robust in controlled errors-in-variables settings and remains competitive in a realistic multi-wavelength cluster-cosmology benchmark involving correlated observables, incomplete coverage, and selection effects. More broadly, the results suggest that distributional matching with explicit observational modeling can serve as a practical alternative to summary-based or neural SBI methods when likelihoods are unavailable, observational pipelines are complex, and calibrated uncertainty is more important than aggressive compression or amortized inference. Future work should focus on principled temperature calibration, scalable kernel approximations, and extensions that incorporate count or abundance information directly into the discrepancy, thereby improving efficiency while preserving the observation-aware structure of the method.

\paragraph{Limitations.} The proposed framework relies on the availability of a known or accurately quantifiable observation model, which may be restrictive in settings where measurement processes are poorly understood or only partially specified. While the method avoids explicit likelihood evaluation, it incurs nontrivial computational cost due to repeated forward simulations and pairwise kernel evaluations, which can scale quadratically with sample size and simulation budget. The theoretical guarantees rely on correct specification and identifiability of the observed-data mapping; under model misspecification or observational non-identifiability, the method may concentrate on a set of observationally equivalent parameters rather than a unique ground truth. Performance is also sensitive to the choice of kernel and its hyperparameters, which govern the geometry of distributional comparison and may affect identifiability in practice. Although MMD provides a powerful distributional metric, it may have reduced sensitivity to certain high-frequency discrepancies or structured differences in high-dimensional settings without careful kernel design. The generalization to multinomial data requires careful consideration and is an interesting avenue for further investigation \cite{koslovsky2026unified}.

\section*{Acknowledgment}
This work was supported by the NSF under Cooperative Agreement 2421782, and the Simons Foundation grant MPS-AI-00010515 awarded to the NSF-Simons AI Institute for Cosmic Origins (CosmicAI, \url{https://www.cosmicai.org/}).

\bibliographystyle{unsrt}
\bibliography{bibliography, zhou_bib}


\appendix

\section{Assumptions} \label{app:assumptions}

Here, we state all the regularity conditions used throughout this work.

\begin{assumption}[Parameter space and prior]
\label{ass:param}
The parameter space $\Theta\subset\mathbb R^p$ is compact. The prior density $p(\theta)$ is continuous on $\Theta$ and strictly positive on $\Theta$.
\end{assumption}

Assumption~\ref{ass:param} guarantees existence of minimizers of the population discrepancy and rules out trivial failures of posterior concentration caused by a prior that excludes the target region.

\begin{assumption}[Kernel regularity]
\label{ass:kernel}
The kernel $k:\mathbb R^d\times\mathbb R^d\to\mathbb R$ is bounded, continuous, and positive definite. Write
\begin{equation}
\|k\|_\infty := \sup_{y,y'\in\mathbb R^d}|k(y,y')|<\infty.
\end{equation}
Moreover, $k$ is characteristic.
\end{assumption}

Boundedness guarantees that MMD is well-defined for all probability measures on $\mathbb R^d$, while characteristicness ensures that the population discrepancy vanishes only when the corresponding distributions coincide.

\begin{assumption}
\label{ass:cont}
The map $\theta\mapsto P_\theta^{\mathrm{obs}}$ is weakly continuous on $\Theta$. Equivalently, for every bounded continuous function $f:\mathbb R^d\to\mathbb R$, the map
\begin{equation}
\theta\longmapsto \int f(y)\,P_\theta^{\mathrm{obs}}(dy)
\end{equation}
is continuous.
\end{assumption}

This assumption is natural for simulators whose latent generative law and observation operator vary continuously with the physical parameters. It implies continuity of the kernel mean embedding and hence continuity of the MMD loss itself.

\begin{assumption}
\label{ass:uniform}
As $n,M\to\infty$,
\begin{equation}
\sup_{\theta\in\Theta}
\bigl|\Delta_{n,M}(\theta)-\Delta(\theta)\bigr|
\;\xrightarrow{P}\;0.
\label{eq:uniform-consistency-ass}
\end{equation}
\end{assumption}

Assumption~\ref{ass:uniform} is the key stochastic approximation condition. It can be verified under standard empirical-process conditions for bounded kernels and a sufficiently regular simulator family. In the present exposition we state it explicitly, since it isolates the statistical content needed for the pseudo-posterior arguments from the details of the simulator implementation.

\begin{assumption}
\label{ass:ident}
Define the MMD-identified set
\begin{equation}
\Theta^\dagger
:=
\arg\min_{\theta\in\Theta}\Delta(\theta).
\end{equation}
This set is nonempty, and for every open neighborhood $U\supset \Theta^\dagger$ there exists a separation constant $\eta(U)>0$ such that
\begin{equation}
\inf_{\theta\notin U}\Delta(\theta)
\ge
\inf_{\vartheta\in\Theta}\Delta(\vartheta)+\eta(U).
\label{eq:gap}
\end{equation}
\end{assumption}

Assumption~\ref{ass:ident} is a standard gap condition. Since $\Theta$ is compact and $\Delta$ is continuous, the identified set is nonempty; the additional statement simply requires strict separation of points outside any neighborhood of the minimizer set.

\begin{assumption}
\label{ass:rate}
Let $\tau=\tau_{n,M}>0$ satisfy $\tau_{n,M}\downarrow0$ and
\begin{equation}
\sup_{\theta\in\Theta} \bigl|\Delta_{n,M}(\theta)-\Delta(\theta)\bigr| = o_P(\tau_{n,M}^2).
\label{eq:rate-cond}
\end{equation}
\end{assumption}

Assumption~\ref{ass:rate} strengthens Assumption~\ref{ass:uniform} in the regime where the temperature vanishes. It implies that the stochastic error in the empirical loss must be asymptotically negligible relative to the scale $\tau_{n,M}^2$ appearing in the exponential weights. The justification for this assumption is provided in Appendix~\ref{sec:rate-justification}.

\begin{assumption}[Correct specification]
\label{ass:correct}
There exists $\theta^\star\in\Theta$ such that
\begin{equation}
P_{\theta^\star}^{\mathrm{obs}}=P_{\mathrm{obs}}.
\end{equation}
\end{assumption}

\section{Justification of the Rate Assumption for Empirical MMD}
\label{sec:rate-justification}

Here, we provide a justification for Assumption~\ref{ass:rate}, namely that the empirical MMD discrepancy converges uniformly to its population counterpart at a rate sufficient to ensure
\begin{equation}
\sup_{\theta\in\Theta}
\bigl|\Delta_{n,M}(\theta)-\Delta(\theta)\bigr|
=
o_P(\tau_{n,M}^2).
\end{equation}
We show that this condition follows from standard concentration results for bounded kernels together with mild regularity conditions on the simulator family, provided the temperature sequence $\tau_{n,M}$ is chosen appropriately.

\subsection{Setup and decomposition}

Recall that for each $\theta\in\Theta$,
\begin{equation}
\Delta(\theta)
=
\mathrm{MMD}_k^2\!\bigl(P_{\mathrm{obs}},P_\theta^{\mathrm{obs}}\bigr),
\qquad
\Delta_{n,M}(\theta)
=
\mathrm{MMD}_k^2\!\bigl(\widehat P_{\mathrm{obs}},\widehat P_{\theta,M}^{\mathrm{obs}}\bigr),
\end{equation}
where $\widehat P_{\mathrm{obs}}$ is based on $n$ observed samples and $\widehat P_{\theta,M}^{\mathrm{obs}}$ is based on $M$ simulated samples.

Using the expansion of squared MMD, we may write
\begin{align}
\Delta(\theta)
&=
\mathbb E[k(Y,Y')]
+
\mathbb E[k(Z_\theta,Z_\theta')]
-
2\mathbb E[k(Y,Z_\theta)],
\\
\Delta_{n,M}(\theta)
&=
\frac{1}{n(n-1)}\sum_{i\neq i'} k(Y_i,Y_{i'})
+
\frac{1}{M(M-1)}\sum_{m\neq m'} k(Z_{\theta,m},Z_{\theta,m'})
-
\frac{2}{nM}\sum_{i,m} k(Y_i,Z_{\theta,m}).
\end{align}

Thus,
\begin{equation}
\Delta_{n,M}(\theta)-\Delta(\theta)
=
A_n + B_M(\theta) + C_{n,M}(\theta),
\label{eq:decomposition}
\end{equation}
where:
\begin{align}
A_n &= \text{error in the observed-observed term}, \\
B_M(\theta) &= \text{error in the simulated-simulated term}, \\
C_{n,M}(\theta) &= \text{error in the cross term}.
\end{align}

We now control each term uniformly in $\theta$.

\begin{assumption}[Uniform Lipschitz regularity]
\label{ass:lipschitz}
There exists $L>0$ such that for all $\theta,\theta'\in\Theta$,
\begin{equation}
\sup_{y\in\mathbb R^d}
\left|
\mathbb E_{Z\sim P_\theta^{\mathrm{obs}}} k(y,Z)
-
\mathbb E_{Z\sim P_{\theta'}^{\mathrm{obs}}} k(y,Z)
\right|
\le L\|\theta-\theta'\|.
\end{equation}
\end{assumption}

This assumption ensures that the class of functions indexed by $\theta$ has controlled complexity.

We begin with pointwise bounds.

\begin{lemma}\label{lem:pointwise}
For each fixed $\theta$,
\begin{equation}
|\Delta_{n,M}(\theta)-\Delta(\theta)| = O_P\!\left(n^{-1/2} + M^{-1/2}\right).
\end{equation}
\end{lemma}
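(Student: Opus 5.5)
We prove the bound by splitting along the decomposition \eqref{eq:decomposition}, $\Delta_{n,M}(\theta)-\Delta(\theta)=A_n+B_M(\theta)+C_{n,M}(\theta)$, with $\theta$ fixed throughout. We show separately that $A_n=O_P(n^{-1/2})$, $B_M(\theta)=O_P(M^{-1/2})$ and $C_{n,M}(\theta)=O_P(n^{-1/2}+M^{-1/2})$. The triangle inequality and the fact that a sum of finitely many $O_P$ terms is $O_P$ of the sum of rates then finish the proof. Only boundedness of $k$ (Assumption~\ref{ass:kernel}) is needed, together with independence: the observed sample $Y_1,\dots,Y_n$ is i.i.d.\ $P_{\mathrm{obs}}$, the simulated batch $Z_{\theta,1},\dots,Z_{\theta,M}$ is i.i.d.\ $P^{\mathrm{obs}}_\theta$, and the two are independent.

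\emph{Terms $A_n$ and $B_M(\theta)$.} $A_n=\frac{1}{n(n-1)}\sum_{i\neq i'}k(Y_i,Y_{i'})-\mathbb E[k(Y,Y')]$ is a centered U-statistic of order two with a bounded symmetric kernel. Hoeffding's variance formula gives
\begin{equation*}
\mathrm{Var}(U_n)=\tfrac{4(n-2)}{n(n-1)}\zeta_1+\tfrac{2}{n(n-1)}\zeta_2 \le \tfrac{C\|k\|_\infty^2}{n},
\end{equation*}
where $\zeta_1,\zeta_2\le \|k\|_\infty^2$. Chebyshev's inequality then yields $A_n=O_P(n^{-1/2})$. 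The same argument applied to the simulated batch gives $B_M(\theta)=O_P(M^{-1/2})$. Since $\theta$ is fixed, no uniformity issue arises.

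\emph{Cross term.} This is the only step needing care, because it is a two-sample statistic in which both samples vary. Write $g_\theta(y)=\mathbb E_{Z\sim P^{\mathrm{obs}}_\theta}k(y,Z)$ and $\bar g(z)=\mathbb E_Y k(Y,z)$. We use the Hoeffding-type decomposition
\begin{equation*}
\frac{1}{nM}\sum_{i,m}k(Y_i,Z_{\theta,m})-\mathbb E k(Y,Z_\theta)
=\frac1n\sum_i\bigl(g_\theta(Y_i)-\mu\bigr)+\frac1M\sum_m\bigl(\bar g(Z_{\theta,m})-\mu\bigr)+R_{n,M},
\end{equation*}
where $\mu=\mathbb E k(Y,Z_\theta)$ and $R_{n,M}$ is the doubly degenerate remainder. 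The first two sums are averages of i.i.d.\ bounded centered variables, so they are $O_P(n^{-1/2})$ and $O_P(M^{-1/2})$ respectively. For the remainder, independence of the two samples makes the summands of $R_{n,M}$ uncorrelated across distinct pairs, giving $\mathbb E R_{n,M}^2\le 4\|k\|_\infty^2/(nM)$. Hence $R_{n,M}=O_P((nM)^{-1/2})$, which is dominated by $n^{-1/2}+M^{-1/2}$. Alternatively, we could bound the whole cross term directly: its variance is at most $\|k\|_\infty^2(1/n+1/M+1/(nM))$, obtained by conditioning on one sample and applying the law of total variance, and Chebyshev's inequality finishes. Multiplying by the factor $2$ does not change the rate.

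We expect the cross-term variance bookkeeping to be the main (mild) obstacle; everything else is routine Chebyshev bounds for bounded U-statistics. If $\Delta_{n,M}$ is instead the biased V-statistic version, as in $\mathrm{MMD}^2$ between empirical measures, it differs from the unbiased form above by diagonal terms bounded by $2\|k\|_\infty(1/n+1/M)$. This is $o(n^{-1/2}+M^{-1/2})$, so the conclusion is unchanged.
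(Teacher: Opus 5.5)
Your proposal is correct and follows the paper's proof: it uses the same three-term split $A_n+B_M(\theta)+C_{n,M}(\theta)$ and bounds each piece separately. You make explicit what the paper only sketches: Hoeffding's variance formula plus Chebyshev for the U-statistic terms, and a projection with degenerate remainder (or total variance) for the cross term, which properly justifies the $n^{-1/2}+M^{-1/2}$ rate behind the paper's phrase ``an average of $nM$ bounded terms''; your V-statistic remark also covers the main-text definition of $\Delta_{n,M}$.
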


\begin{proof}
Each term in \eqref{eq:decomposition} is an average of bounded random variables.

For the observed-observed term $A_n$, standard U-statistic concentration for bounded kernels yields
\begin{equation}
|A_n| = O_P(n^{-1/2}).
\end{equation}

Similarly, for the simulated term $B_M(\theta)$,
\begin{equation}
|B_M(\theta)| = O_P(M^{-1/2}).
\end{equation}

For the cross term $C_{n,M}(\theta)$, which is an average of $nM$ bounded terms,
\begin{equation}
|C_{n,M}(\theta)| = O_P(n^{-1/2} + M^{-1/2}).
\end{equation}

Combining these bounds yields the result.
\end{proof}

For a similar result, see Theorem~10 in \citep{gretton2012kernel}.

We now extend the pointwise bound to a uniform bound over $\Theta$.

\begin{lemma}
\label{lem:uniform-rate}
Under Assumptions~\ref{ass:param}, \ref{ass:kernel}, \ref{ass:cont} and \ref{ass:lipschitz},
\begin{equation}
\sup_{\theta\in\Theta}
|\Delta_{n,M}(\theta)-\Delta(\theta)|
=
O_P\!\left(n^{-1/2} + M^{-1/2}\right).
\end{equation}
\end{lemma}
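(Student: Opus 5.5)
The plan is to abandon the three-term decomposition \eqref{eq:decomposition} and work in the RKHS, where the dependence on $\theta$ is confined to one embedding. Write $\mu_{\mathrm{obs}}=\mu_k(P_{\mathrm{obs}})$ and $\mu_\theta=\mu_k(P^{\mathrm{obs}}_\theta)$, with hats denoting the empirical embeddings. The V-statistic version of $\Delta_{n,M}$ equals $\|\hat\mu_{\mathrm{obs}}-\hat\mu_{\theta,M}\|_{\mathcal H_k}^2$. The U-statistic version in the display above differs from it, uniformly in $\theta$, by at most $2\|k\|_\infty(1/(n-1)+1/(M-1))$, since only the diagonal terms change. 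This bias is $O(n^{-1}+M^{-1})$ and is negligible.

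For the V-statistic version we use $|a^2-b^2|=|a-b|(a+b)$, the triangle inequality, and $\|\mu\|_{\mathcal H_k}\le\|k\|_\infty^{1/2}$ for every embedding. This gives
\begin{equation*}
\sup_{\theta}\bigl|\|\hat\mu_{\mathrm{obs}}-\hat\mu_{\theta,M}\|^2-\|\mu_{\mathrm{obs}}-\mu_\theta\|^2\bigr|
\le 4\|k\|_\infty^{1/2}\Bigl(\|\hat\mu_{\mathrm{obs}}-\mu_{\mathrm{obs}}\|+\sup_{\theta}\|\hat\mu_{\theta,M}-\mu_\theta\|\Bigr).
\end{equation*}
The first term does not involve $\theta$. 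It is a mean of i.i.d.\ bounded Hilbert-space elements, so its second moment is at most $\|k\|_\infty/n$, and hence it is $O_P(n^{-1/2})$. Everything therefore reduces to showing
\begin{equation*}
\sup_{\theta\in\Theta}\|\hat\mu_{\theta,M}-\mu_\theta\|_{\mathcal H_k}=O_P(M^{-1/2}).
\end{equation*}
This step is the main obstacle.

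A supremum over $\theta$ of simulation error only makes sense if the simulated batches are coupled across $\theta$. I would make this explicit with common random numbers: $Z_{\theta,m}=G_\theta(\xi_m)$, where $\xi_1,\dots,\xi_M$ are i.i.d.\ seeds independent of $\theta$. This is implicit in treating $\Delta_{n,M}$ as a random function on $\Theta$. I would also read Assumption~\ref{ass:lipschitz} at the level of the pathwise increments. Concretely, I would need
\begin{equation*}
\mathbb E\|k(G_\theta(\xi),\cdot)-k(G_{\theta'}(\xi),\cdot)\|^2_{\mathcal H_k}\le 2L\|\theta-\theta'\|.
\end{equation*}
The population version $\|\mu_\theta-\mu_{\theta'}\|^2\le 2L\|\theta-\theta'\|$ follows directly from Assumption~\ref{ass:lipschitz}, by expanding the norm and bounding each cross term by $L\|\theta-\theta'\|$. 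Under these conditions, the process $X_\theta=\sqrt M(\hat\mu_{\theta,M}-\mu_\theta)$ has bounded Hilbert-valued summands. Pinelis-type Hoeffding inequalities in Hilbert space then make its increments sub-Gaussian with respect to the metric $d(\theta,\theta')=\|\theta-\theta'\|^{1/2}$ (up to constants). More precisely, the increments are sub-Gaussian in a variance proxy controlled by $d$, plus a Bernstein term of order $M^{-1/2}$.

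Compactness of $\Theta\subset\mathbb R^p$ (Assumption~\ref{ass:param}) gives covering numbers $N(\varepsilon,\Theta,d)\lesssim\varepsilon^{-2p}$. The Dudley entropy integral $\int_0^{\mathrm{diam}}\sqrt{\log N(\varepsilon)}\,d\varepsilon$ is then finite. Generic chaining, or a simple dyadic-net argument, yields $\mathbb E\sup_\theta\|X_\theta\|<\infty$, which gives the claimed $O_P(M^{-1/2})$.

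My first attempt would be the cruder route. Take an $\varepsilon$-net with $\varepsilon=M^{-1}$, apply a union bound with the vector Hoeffding inequality at the net points, and control the discretization error by the Lipschitz bound. This yields $O_P(\sqrt{\log M/M})$. Chaining is then needed only to remove the logarithm. Assumption~\ref{ass:cont} enters only to guarantee measurability and continuity of $\theta\mapsto\mu_\theta$, so that the supremum can be restricted to a countable dense set.

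Combining the three pieces gives the lemma: the $n$-term, the $M$-term, and the U/V bias.
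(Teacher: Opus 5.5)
Your reductions before the main obstacle are sound: the U/V bias, the bound via $|a^2-b^2|=|a-b|(a+b)$ with $\|\mu\|_{\mathcal H_k}\le\|k\|_\infty^{1/2}$, and the $O_P(n^{-1/2})$ term. You are also right that a supremum of simulation error requires a coupling across $\theta$. The paper's $\varepsilon$-net proof needs the same coupling when it asserts $|\Delta_{n,M}(\theta)-\Delta_{n,M}(\theta_j)|\le C\varepsilon$ from Assumption~\ref{ass:lipschitz}, which only constrains population embeddings. Your choice $\varepsilon=M^{-1}$, accepting a $\sqrt{\log M}$ loss, is also more careful than the paper's ``let $\varepsilon\to0$''.

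The gap is that your replacement hypothesis, $\mathbb E\|k(G_\theta(\xi),\cdot)-k(G_{\theta'}(\xi),\cdot)\|_{\mathcal H_k}^2\le 2L\|\theta-\theta'\|$, is still a bound for each fixed pair $(\theta,\theta')$. Such bounds cannot control $\sup_{\theta\in\Theta}\|\hat\mu_{\theta,M}-\mu_\theta\|_{\mathcal H_k}$ over a continuum, and this breaks two steps of your plan.
\begin{itemize}
\item Restricting the supremum to a countable dense set needs separability of the random map $\theta\mapsto\hat\mu_{\theta,M}$. Continuity of the deterministic map $\theta\mapsto\mu_\theta$, which is all Assumption~\ref{ass:cont} gives, says nothing about that.
\item In the chaining, the sub-exponential part of Pinelis' Bernstein bound has scale $b/\sqrt M$, where $b$ is an almost-sure bound on the increment $\|\psi\|_{\mathcal H_k}$. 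Your hypothesis gives no $b$ that shrinks as $\theta'\to\theta$. At level $j$ the union bound over $\asymp 4^{jp}$ links therefore costs order $jp/\sqrt M$, and the sum over levels diverges. The chain must stop at a finite resolution, which leaves the terminal error $\sup_{\|\theta-\theta_j\|\le\varepsilon}\|\hat\mu_{\theta,M}-\hat\mu_{\theta_j,M}\|_{\mathcal H_k}$. That is exactly the discretization error your crude route claims to bound ``by the Lipschitz bound'', and no fixed-pair moment bound controls a supremum over a cell.
\end{itemize}

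This is not only a proof artefact. Take the following setup.
\begin{itemize}
\item Let $d=1$, $\Theta=[0,1]$, $\xi\sim U(0,1)$, and use a Gaussian kernel.
\item Let $S_\theta$ be the range of $\beta(\theta)$ for a Borel isomorphism $\beta:[0,1]\to[0,1]^{\mathbb N}$. Then $S_\theta$ runs over all countable subsets and $(\theta,\xi)\mapsto\mathbf 1\{\xi\in S_\theta\}$ is Borel.
\item Set $G_\theta(\xi)=\xi+c\,\mathbf 1\{\xi\in S_\theta\}$.
\end{itemize}
Each $S_\theta$ is null, so every $P_\theta^{\mathrm{obs}}$ is $U(0,1)$. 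Hence Assumptions~\ref{ass:param}, \ref{ass:kernel}, \ref{ass:cont} and \ref{ass:lipschitz} hold with $L=0$, and your pathwise condition holds with constant $0$.

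Yet for every sample there is a $\theta$ with $S_\theta\supseteq\{\xi_1,\dots,\xi_M\}$, so all its simulated points are shifted by $c$. This gives $\|\hat\mu_{\theta,M}-\mu_\theta\|_{\mathcal H_k}^2\ge\|\mu_\theta\|_{\mathcal H_k}^2-2\sup_{|x-y|\ge c-1}k(x,y)$, which is bounded away from $0$ for large $c$. With $P_{\mathrm{obs}}=U(0,1)$, $\sup_\theta|\Delta_{n,M}(\theta)-\Delta(\theta)|$ does not even tend to $0$.

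What you need is a local-uniform, bracketing-type modulus. One option is $\mathbb E\sup_{\|\theta'-\theta\|\le\delta}\|k(G_{\theta'}(\xi),\cdot)-k(G_\theta(\xi),\cdot)\|_{\mathcal H_k}^2\le C\delta$. Another is almost-sure Lipschitz continuity of $\theta\mapsto G_\theta(\xi)$ with a square-integrable constant, combined with a Lipschitz feature map (true for the Gaussian kernel). Under such a condition the discretization error is controlled uniformly on each cell, so your crude net argument yields $O_P(\sqrt{\log M/M})$. Bracketing (Ossiander-type) chaining, which truncates the Bernstein term using the local suprema, then removes the logarithm.
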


\begin{proof}
Since $\Theta$ is compact, for any $\varepsilon>0$ there exists a finite $\varepsilon$-net $\{\theta_1,\dots,\theta_N\}$ covering $\Theta$, with $N \lesssim \varepsilon^{-p}$.

By the pointwise bound (Lemma~\ref{lem:pointwise}),
\begin{equation}
\max_{1\le j\le N}
|\Delta_{n,M}(\theta_j)-\Delta(\theta_j)|
=
O_P(n^{-1/2}+M^{-1/2}).
\end{equation}

Now for any $\theta$, choose $\theta_j$ such that $\|\theta-\theta_j\|\le\varepsilon$. Using the Lipschitz assumption and boundedness of the kernel expectations, one can show
\begin{equation}
|\Delta_{n,M}(\theta)-\Delta_{n,M}(\theta_j)| \le C\varepsilon\quad{\rm and}\quad
|\Delta(\theta)-\Delta(\theta_j)|
\le C\varepsilon
\end{equation}
for some constant $C$.

Therefore,
\begin{equation}
|\Delta_{n,M}(\theta)-\Delta(\theta)|
\le
|\Delta_{n,M}(\theta_j)-\Delta(\theta_j)| + 2C\varepsilon.
\end{equation}

Taking the supremum over $\theta\in\Theta$ and then letting $\varepsilon\to 0$ yields
\begin{equation}
\sup_{\theta\in\Theta}
|\Delta_{n,M}(\theta)-\Delta(\theta)|
=
O_P(n^{-1/2}+M^{-1/2}).
\end{equation}
\end{proof}

We now connect the uniform rate to Assumption~\ref{ass:rate}.

\begin{theorem}
\label{thm:rate-verify}
Suppose Assumptions~\ref{ass:param}, \ref{ass:kernel}, \ref{ass:cont} and \ref{ass:lipschitz} hold. If the temperature sequence $\tau_{n,M}$ satisfies
\begin{equation}
n^{-1/2} + M^{-1/2} = o(\tau_{n,M}^2),
\label{eq:tau-condition}
\end{equation}
then
\begin{equation}
\sup_{\theta\in\Theta}
|\Delta_{n,M}(\theta)-\Delta(\theta)|
=
o_P(\tau_{n,M}^2).
\end{equation}
\end{theorem}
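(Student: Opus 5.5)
The natural plan is to combine the uniform rate of Lemma~\ref{lem:uniform-rate} with the temperature condition \eqref{eq:tau-condition} through the elementary calculus of $O_P$ and $o_P$ symbols. Write $r_{n,M}:=n^{-1/2}+M^{-1/2}$ and $S_{n,M}:=\sup_{\theta\in\Theta}|\Delta_{n,M}(\theta)-\Delta(\theta)|$. Lemma~\ref{lem:uniform-rate} gives $S_{n,M}=O_P(r_{n,M})$ under exactly the hypotheses of the theorem. The condition \eqref{eq:tau-condition} says the deterministic ratio $c_{n,M}:=r_{n,M}/\tau_{n,M}^2$ tends to $0$.

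The key step is to verify $S_{n,M}/\tau_{n,M}^2\xrightarrow{P}0$ directly from the definitions. Fix $\epsilon>0$ and $\delta>0$. By tightness of $S_{n,M}/r_{n,M}$, choose a constant $K_\delta$ and indices such that $P(S_{n,M}>K_\delta r_{n,M})<\delta$ for all sufficiently large $(n,M)$. For large $(n,M)$ we also have $c_{n,M}<\epsilon/K_\delta$. On the complement of the bad event,
\[
S_{n,M}/\tau_{n,M}^2\le K_\delta c_{n,M}<\epsilon .
\]
Hence $\limsup P(S_{n,M}/\tau_{n,M}^2>\epsilon)\le\delta$, and since $\delta$ is arbitrary the claim follows. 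In short, $O_P(r_{n,M})\cdot(r_{n,M}/\tau^2_{n,M})=O_P(1)\,o(1)=o_P(1)$.

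The only delicate point lies upstream rather than in this argument: the validity of Lemma~\ref{lem:uniform-rate} itself. Its $\varepsilon$-net step needs the discretization error $|\Delta_{n,M}(\theta)-\Delta_{n,M}(\theta_j)|$ to be controlled uniformly. Assumption~\ref{ass:lipschitz} controls population kernel means, not the simulated empirical measures, so the simulations would have to be generated under a coupling, e.g.\ common random numbers, that makes $\theta\mapsto\Delta_{n,M}(\theta)$ regular. Moreover, $\varepsilon$ would have to be taken of order $r_{n,M}$, which only costs a logarithmic union-bound factor by Hoeffding/McDiarmid bounds on the $N\lesssim\varepsilon^{-p}$ net points. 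Since the theorem lets us assume Lemma~\ref{lem:uniform-rate}, I would simply invoke it. I would also remark that even a rate inflated by $\sqrt{\log(n\wedge M)}$ would suffice if \eqref{eq:tau-condition} were strengthened accordingly. Finally, note that $n,M\to\infty$ is a joint limit, so all "sufficiently large" statements are understood along the directed index set. Nothing else changes.
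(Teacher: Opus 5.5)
Your proposal is correct and takes the same route as the paper: you invoke Lemma~\ref{lem:uniform-rate} to get $\sup_{\theta\in\Theta}|\Delta_{n,M}(\theta)-\Delta(\theta)|=O_P(n^{-1/2}+M^{-1/2})$ and then use $O_P(1)\cdot o(1)=o_P(1)$ under \eqref{eq:tau-condition}; your explicit $\epsilon$--$\delta$ tightness argument just spells out the step the paper justifies by citing van der Vaart. Your caveats about the $\varepsilon$-net step in Lemma~\ref{lem:uniform-rate} are fair: Assumption~\ref{ass:lipschitz} controls only population kernel means, not the separately simulated empirical measures, and the net size grows as $\varepsilon\to0$. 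These concern the upstream lemma, not this theorem.
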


\begin{proof}
By Lemma~\ref{lem:uniform-rate},
\begin{equation}
\sup_{\theta\in\Theta}
|\Delta_{n,M}(\theta)-\Delta(\theta)|
=
O_P(n^{-1/2}+M^{-1/2}).
\end{equation}
Let $a_{n,M}=n^{-1/2}+M^{-1/2}$. Then
\begin{equation}
\frac{\sup_\theta |\Delta_{n,M}(\theta)-\Delta(\theta)|}{\tau_{n,M}^2}
=
O_P\!\left(\frac{a_{n,M}}{\tau_{n,M}^2}\right).
\end{equation}
Under condition \eqref{eq:tau-condition}, $a_{n,M}/\tau_{n,M}^2 \to 0$ \citep[see Section 2.2 of][]{vd1998asymptotic}, hence
\begin{equation}
\frac{\sup_\theta |\Delta_{n,M}(\theta)-\Delta(\theta)|}{\tau_{n,M}^2}
\xrightarrow{P}0,
\end{equation}
which proves the result. 
\end{proof}

Theorem~\ref{thm:rate-verify} shows that Assumption~\ref{ass:rate} is not a primitive requirement but follows from two components. First, a uniform stochastic rate for empirical MMD, which is a consequence of bounded kernels and regularity of the simulator family, and second, a calibration condition on the temperature sequence $\tau_{n,M}$.

We note that no deterministic boundedness of $\Delta_{n,M}$ is required. The empirical MMD estimator may exhibit large deviations with small probability, but as long as these deviations are controlled at the $O_P(n^{-1/2}+M^{-1/2})$ scale, they are asymptotically negligible relative to $\tau_{n,M}^2$ under condition \eqref{eq:tau-condition}. This is sufficient for the posterior concentration results established in Section~\ref{sec:theory}.

\section{\texttt{OASIS} Algorithm}

Algorithm~\ref{alg:OASIS} implements a likelihood-free inference procedure based on distributional comparison between observed data and forward simulations. The key feature is the explicit two-stage simulation: latent samples are first drawn from the mechanistic simulator and are subsequently propagated through the observation model. This produces synthetic datasets that are directly comparable to the observed data at the level of the observed-data law.

The discrepancy between observed and simulated datasets is quantified using MMD, which operates on empirical measures and avoids the need for handcrafted summary statistics. The resulting loss $\Delta_{n,M}(\theta)$ is stochastic due to the finite simulation budget $M$, inducing a pseudo-posterior indexed by both $M$ and the temperature parameter $\tau$. Larger values of $M$ reduce Monte Carlo variability in the loss, while $\tau$ controls the concentration of the posterior around minimizers of the population discrepancy.

The algorithm can be interpreted as a particle-based approximation to a Gibbs-type posterior defined through a distributional loss. It is straightforward to parallelize across particles and simulation draws, and can be combined with adaptive proposals or sequential schemes to improve efficiency. In practice, the computational cost is dominated by simulator evaluations and scales as $\mathcal O(n_\theta M)$.

\begin{algorithm}[t]
\caption{\texttt{OASIS}: MMD-based Simulation-Based Inference with Observational Modeling}
\label{alg:OASIS}
\begin{algorithmic}[1]

\Require Observed data $\mathcal D_{\mathrm{obs}}=\{y_i^{\mathrm{obs}}\}_{i=1}^n$, prior $p(\theta)$, simulator $P_\theta^{\mathrm{true}}$, observation model $P^{\rm Err}_\theta$, kernel $k$, number of particles $n_\theta$, simulation budget $M$, temperature $\tau$
\Ensure Weighted sample $\{(\theta_j,w_j)\}_{j=1}^{n_\theta}$ approximating $\pi_\tau(\theta)$

\State Compute empirical measure $\widehat P_{\mathrm{obs}} = \frac{1}{n}\sum_{i=1}^n \delta_{y_i^{\mathrm{obs}}}$

\For{$j = 1$ to $n_\theta$}
    \State Draw $\theta_j \sim p(\theta)$
    
    \For{$m = 1$ to $M$}
        \State Draw $u_{j,m}^{\mathrm{true}} \sim P_{\theta_j}^{\mathrm{true}}$
        \State Draw $y_{j,m}^{\mathrm{sim,obs}} \sim P^{\rm Err}_{\theta_j}(\cdot \mid u_{j,m}^{\mathrm{true}})$
    \EndFor
    
    \State Compute discrepancy
    \[
    \Delta_{n,M}(\theta_j) = \mathrm{MMD}_k^2\!\left(\widehat P_{\mathrm{obs}}, \widehat P_{\theta_j,M}^{\mathrm{sim,obs}}\right)
    \]
    
    \State Set unnormalized weight
    \[
    \widetilde w_j = \exp\!\left(-\frac{\Delta_{n,M}(\theta_j)}{2\tau^2}\right)
    \]
\EndFor

\State Normalize weights
\[
w_j = \frac{\widetilde w_j}{\sum_{k=1}^{n_\theta} \widetilde w_k}
\]

\State \Return $\{(\theta_j, w_j)\}_{j=1}^{n_\theta}$

\end{algorithmic}
\end{algorithm}

\section{Temperature Selection}
\label{sec:tau}

The temperature parameter $\tau$ controls the concentration of the pseudo-posterior
\begin{equation*}
\pi_\tau(d\theta) \propto p(\theta)\exp\{-\Delta(\theta)/(2\tau^2)\}d\theta,    
\end{equation*}
and therefore governs the trade-off between fidelity to the data and robustness to Monte Carlo variability. Small values of $\tau$ concentrate mass near minimizers of the discrepancy $\Delta(\theta)$, while larger values yield a more diffuse posterior that accounts for uncertainty induced by finite sample size and simulation noise.

\paragraph{Heuristic calibration.}
In practice, we select $\tau$ adaptively based on the empirical scale of the discrepancy across parameter draws. Given a set of particles $\{\theta_j\}_{j=1}^{n_\theta}$ and their associated empirical losses $\{\Delta_{n,M}(\theta_j)\}$, and assuming $n \approx M$ in our experiments, we set
\begin{equation}
\tau = \frac{\sqrt{2\,\mathrm{median}\bigl(\{\Delta_{n,M}(\theta_j)\}\bigr) + \epsilon}}{\sqrt{n}},
\label{eq:tau}
\end{equation}
where $n$ is the size of the observed dataset and $\epsilon>0$ is a small constant for numerical stability.

This choice can be interpreted as calibrating $\tau$ to the typical scale of the discrepancy under the empirical distribution of parameter draws. The use of the median provides robustness to outliers and heavy-tailed behavior in $\Delta_{n,M}(\theta)$, which can arise from poorly fitting parameter values. The factor of $2$ matches the scaling in the pseudo-likelihood $\exp\{-\Delta/(2\tau^2)\}$, yielding weights of order $\exp(-1)$ for parameters whose discrepancy is close to the median.

The normalization by $\sqrt{n}$ reflects the fact that $\Delta_{n,M}(\theta)$ is computed from empirical measures based on $n$ observations. Under standard empirical process considerations, discrepancies such as MMD scale as $O_p(n^{-1/2})$ around their population counterparts. Dividing by $\sqrt{n}$ therefore aligns $\tau$ with the sampling variability of the empirical discrepancy, preventing under-concentration and over-concentration of the pseudo-posterior as the dataset size increases.

From a statistical perspective, $\tau$ plays a role analogous to a noise scale or uncertainty parameter in likelihood-based inference. The proposed heuristic calibrates $\tau$ to reflect both (i) the intrinsic variability of the observed data through the $n^{-1/2}$ scaling, and (ii) the variability of the discrepancy across parameter values through the empirical distribution of $\Delta_{n,M}(\theta)$. As a result, parameters whose induced observed-data distributions are within the typical fluctuation scale of the empirical distribution receive non-negligible weight, while parameters with substantially larger discrepancies are exponentially downweighted.

In our experiments, this heuristic produced stable posterior approximations across a range of simulation budgets and data regimes without additional tuning. We found the median-based rule to provide a simple, robust default that adapts automatically to the problem's scale. Future work can study alternative strategies, such as cross-validation or ESS-based calibration in SMC, to optimize for $\tau$. While our posteriors perform competitively compared to established methods, further performance improvements are expected under a more rigorous tuning strategy. 

\paragraph{Sensitivity to the temperature parameter \texorpdfstring{$\tau$}{tau}.} The temperature parameter $\tau$ governs the concentration of the pseudo-posterior by controlling how strongly the discrepancy $\Delta(\theta)$ influences the weights
\begin{equation}
w(\theta) \;\propto\; \exp\!\left(-\frac{\Delta(\theta)}{2\tau^2}\right).
\end{equation}
Small values of $\tau$ concentrate mass on a small set of low-discrepancy simulations, whereas large values flatten the weighting scheme and yield a more diffuse posterior. In this sense, $\tau$ directly controls a bias–variance tradeoff in the resulting inference procedure.

This role closely mirrors temperature scaling in generalized Bayesian inference, where $\tau$ effectively rescales the loss function and, in turn, the implied likelihood. From this perspective, $\tau$ determines how aggressively the method treats the discrepancy as a surrogate likelihood, with smaller values corresponding to a sharper, more data-dominated posterior and larger values yielding a more conservative, prior-influenced distribution.

In all experiments, we use the data-driven heuristic
\[
\tau_{\mathrm{heuristic}}
\;=\;
\frac{\sqrt{2\,\mathrm{median}\{\Delta(\theta_j)\}}}{\sqrt{n}},
\]
which adapts to both the scale of the discrepancy and the sample size. This normalization allows the exponent in the pseudo-posterior to remain well-scaled as $n$ grows and as the magnitude of $\Delta(\theta)$ varies across problems, thereby avoiding degenerate regimes without requiring manual tuning.

To study sensitivity to $\tau$, we use the EIV experiment, a linear regression with measurement error in both the covariate and response variables, as discussed in Appendix~\ref{app:EIV_linear}. Figure~\ref{fig:tau_sensitivity} shows the effect of varying $\tau$ over several orders of magnitude relative to this heuristic value. The top row reports posterior means for the intercept, slope, and intrinsic variance, while the bottom row shows empirical $90\%$ coverage across repeated realizations.

Several consistent patterns emerge. For very small $\tau$ (i.e., $\tau \ll \tau_{\mathrm{heuristic}}$), the pseudo-posterior becomes overly concentrated. In this regime, the weighting scheme effectively selects a small number of simulations with minimal discrepancy, making the estimator highly sensitive to Monte Carlo variability. This leads to unstable point estimates and severe undercoverage, with empirical coverage near zero across all parameters. The behavior is analogous to a hard minimum-discrepancy estimator and reflects overfitting to simulation noise.

As $\tau$ increases toward the heuristic scale, both estimation and calibration improve sharply. Around $\tau \approx \tau_{\mathrm{heuristic}}$, the posterior means stabilize and align closely with the ground truth, while empirical coverage rises rapidly toward the nominal $90\%$ level. This transition is notably steep, indicating that the heuristic choice lies near a critical regime where the pseudo-posterior achieves a balance between concentration and dispersion.

For large $\tau$ (i.e., $\tau \gg \tau_{\mathrm{heuristic}}$), the pseudo-posterior becomes increasingly diffuse. In this regime, the weights vary only weakly across parameter draws, so inference is effectively governed by the prior sampling distribution rather than the discrepancy. Consequently, coverage approaches or exceeds the nominal level (often becoming conservative), while point estimates may exhibit mild bias due to insufficient concentration around the true parameters. This effect is particularly visible for the intrinsic variance, which remains systematically overestimated in the high-temperature regime.

Figure~\ref{fig:tau_sensitivity} demonstrates that $\tau$ is a genuine calibration parameter that influences both estimation accuracy and uncertainty quantification. The proposed heuristic performs well in practice by selecting a value near the transition between underconcentrated and overconcentrated regimes, yielding stable estimates and near-nominal coverage across parameters. At the same time, the sharp dependence of coverage on $\tau$ suggests that more principled calibration strategies (e.g., choosing $\tau$ to target nominal frequentist coverage, minimize predictive risk, or satisfy information theoretic criteria) could further improve performance. Developing such adaptive procedures in a computationally efficient manner remains an important direction for future work.

\begin{figure}[t]
    \centering
    \includegraphics[width=\textwidth]{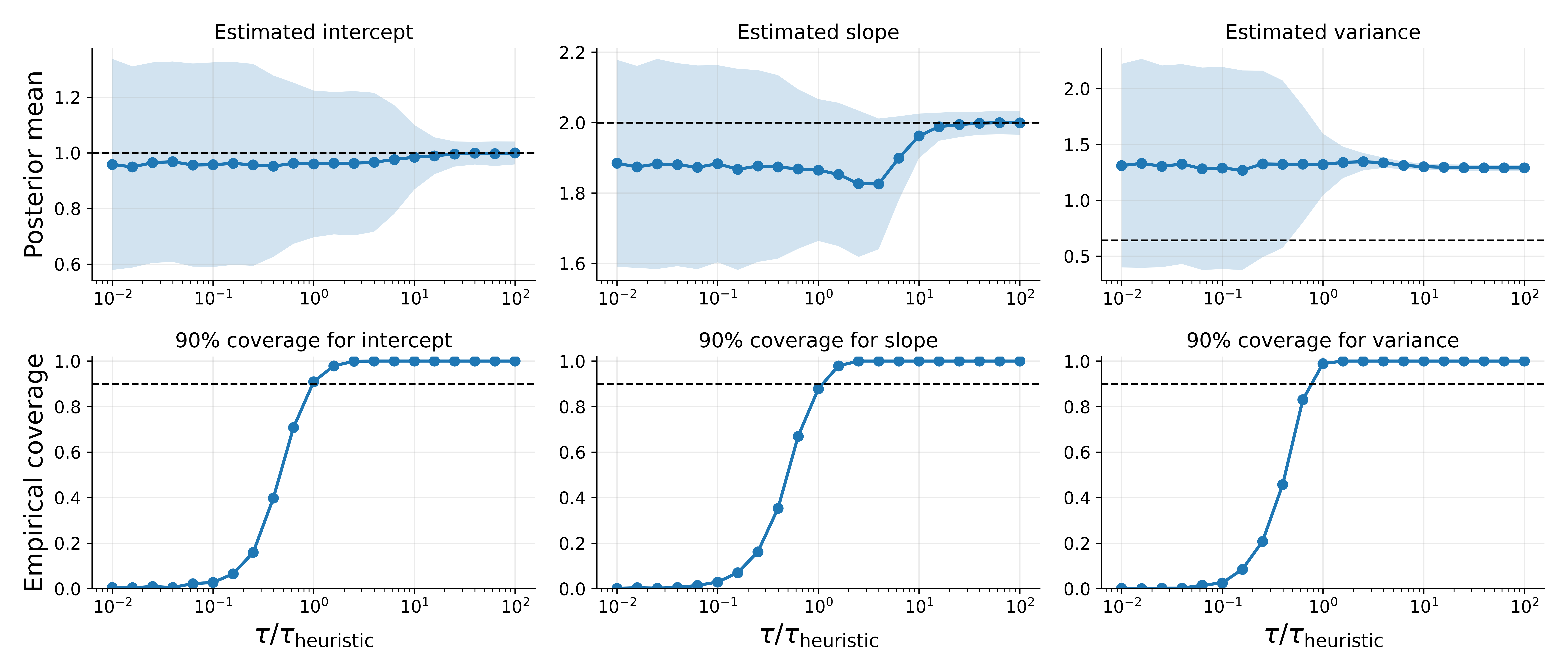}
    \caption{Sensitivity of the OASIS pseudo-posterior to the temperature parameter $\tau$, shown relative to the heuristic choice $\tau_{\mathrm{heuristic}}$. Top row: posterior means for the intercept, slope, and intrinsic variance, with dashed lines indicating ground truth. Bottom row: empirical $90\%$ coverage across repeated realizations. Small $\tau$ leads to overconcentration and severe undercoverage, while large $\tau$ yields diffuse posteriors with conservative coverage. The heuristic choice lies near the transition region where estimates stabilize, and coverage approaches the nominal level.}
    \label{fig:tau_sensitivity}
\end{figure}

\section{Problem Setup}
\label{sec:setup_app}

We consider an observed dataset
\begin{equation}
\mathcal D_{\mathrm{obs}}=\{y_i^{\mathrm{obs}}\}_{i=1}^n,
\qquad
y_i^{\mathrm{obs}}\in\mathbb R^d,
\end{equation}
drawn from an observed-data distribution $P_{\mathrm{obs}}$ on $\mathbb R^d$. Throughout, the vector $y_i^{\mathrm{obs}}$ is understood to contain all quantities retained after the observational pipeline has acted on the underlying population. In particular, we do not distinguish between predictors and responses, nor do we impose an auxiliary regression structure. The inferential goal is instead to compare the full empirical distribution of the observed sample with the distribution induced by a mechanistic simulator under latent parameters $\theta$.

This formulation is particularly natural in scientific settings where the data arise through a two-stage process. First, a latent or ``true'' population is generated according to a structural model indexed by $\theta$. Second, observational noise, survey incompleteness, thresholding, censoring, or other selection effects distort that latent population before the final catalog is recorded. In such problems, the scientifically meaningful comparison is not between the observed sample and the latent simulator output directly, but between the observed sample and the simulator output after the same observational effects have been applied. Our construction is designed precisely for this setting.

To formalize this idea, let $\theta\in\Theta\subseteq\mathbb R^p$ denote the simulator parameter, endowed with prior distribution $p(\theta)$. For each $\theta$, the simulator proceeds in two phases. In the first phase, it generates latent population-level variables
\begin{equation}
u^{\mathrm{true}} \mid \theta \sim P_{\theta}^{\mathrm{true}},
\end{equation}
where $u^{\mathrm{true}}$ denotes the idealized noiseless state. In the second phase, the observational mechanism acts on $u^{\mathrm{true}}$ to produce an observed realization
\begin{equation}
y^{\mathrm{sim,obs}} \mid u^{\mathrm{true}},\theta \sim Q_\theta(\,\cdot\,\mid u^{\mathrm{true}}).
\end{equation}
Here $Q_\theta$ is the observation operator, which may encode measurement noise, heteroscedastic scatter, missingness, truncation, catalog cuts, detection thresholds, or any other survey-selection mechanism. Integrating out the latent population yields the \emph{induced observed-data law}
\begin{equation}
P_\theta^{\mathrm{obs}}(A)
=
\int Q_\theta(A\mid u)\,P_\theta^{\mathrm{true}}(du),
\qquad
A\subseteq \mathbb R^d.
\end{equation}
The statistical task is then to identify those values of $\theta$ for which the induced observed law $P_\theta^{\mathrm{obs}}$ is close to the empirical law underlying $\mathcal D_{\mathrm{obs}}$.

The empirical measure of the observed sample is
\begin{equation}
\widehat P_{\mathrm{obs}}
=
\frac{1}{n}\sum_{i=1}^n \delta_{y_i^{\mathrm{obs}}},
\label{eq:empirical-obs}
\end{equation}
where $\delta_y$ denotes a Dirac measure at $y$. For each proposed parameter value $\theta$, we generate a simulated observed batch of size $M$,
\begin{equation}
\mathcal D_{\theta,M}^{\mathrm{sim,obs}}
=
\{y_{\theta,1}^{\mathrm{sim,obs}},\dots,y_{\theta,M}^{\mathrm{sim,obs}}\},
\qquad
y_{\theta,m}^{\mathrm{sim,obs}}\stackrel{\mathrm{iid}}{\sim}P_\theta^{\mathrm{obs}},
\end{equation}
and define its empirical measure by
\begin{equation}
\widehat P_{\theta,M}^{\mathrm{sim,obs}}
=
\frac{1}{M}\sum_{m=1}^M \delta_{y_{\theta,m}^{\mathrm{sim,obs}}}.
\label{eq:empirical-sim}
\end{equation}
This empirical distribution represents what the survey would have observed, had the world been governed by parameter $\theta$.

Our discrepancy between observed and simulated data is based on the maximum mean discrepancy (MMD). Let $k:\mathbb R^d\times\mathbb R^d\to\mathbb R$ be a bounded positive-definite kernel, and let $\mathcal H_k$ denote its reproducing kernel Hilbert space. For any probability measure $P$ on $\mathbb R^d$, define the kernel mean embedding
\begin{equation}
\mu_k(P)
=
\int k(y,\cdot)\,P(dy)\in\mathcal H_k.
\end{equation}
The squared maximum mean discrepancy between probability measures $P$ and $Q$ is
\begin{equation}
\mathrm{MMD}_k^2(P,Q)
=
\|\mu_k(P)-\mu_k(Q)\|_{\mathcal H_k}^2.
\end{equation}
Equivalently,
\begin{align}
\mathrm{MMD}_k^2(P,Q)
&=
\mathbb E_{Y,Y'\sim P}[k(Y,Y')]
+
\mathbb E_{W,W'\sim Q}[k(W,W')]
-
2\mathbb E_{Y\sim P, W\sim Q}[k(Y,W)].
\label{eq:mmd-expanded}
\end{align}
When $k$ is characteristic, $\mathrm{MMD}_k(P,Q)=0$ if and only if $P=Q$, so the discrepancy metrizes equality of distributions rather than agreement only in low-order moments.

For each simulated batch at parameter value $\theta$, we define the empirical loss
\begin{equation}
\Delta_{n,M}(\theta)
:=
\mathrm{MMD}_k^2\!\bigl(\widehat P_{\mathrm{obs}},\widehat P_{\theta,M}^{\mathrm{sim,obs}}\bigr).
\label{eq:empirical-loss}
\end{equation}
This loss is small when the simulated observed catalog resembles the actual observed catalog in the geometry induced by the kernel $k$. Unlike an auxiliary-regression summary, $\Delta_{n,M}(\theta)$ compares the empirical distributions directly and therefore retains substantially richer information about the joint structure of the observed variables.

To obtain a likelihood-free posterior surrogate, we draw
\begin{equation}
\theta_1,\dots,\theta_{n_\theta}\stackrel{\mathrm{iid}}{\sim}p(\theta),
\end{equation}
simulate an observed batch of size $M$ for each $\theta_j$, compute the corresponding discrepancy $\Delta_{n,M}(\theta_j)$, and assign kernel weights
\begin{equation}
\widetilde w_j
=
\exp\!\left\{-\frac{\Delta_{n,M}(\theta_j)}{2\tau^2}\right\},
\qquad
w_j
=
\frac{\widetilde w_j}{\sum_{k=1}^{n_\theta}\widetilde w_k},
\end{equation}
where $\tau>0$ is a temperature or bandwidth parameter. The resulting self-normalized empirical measure
\begin{equation}
\Pi_{n_\theta}^{\mathrm{MMD}}(d\theta)
=
\sum_{j=1}^{n_\theta} w_j\,\delta_{\theta_j}(d\theta)
\label{eq:pseudopost}
\end{equation}
serves as a pseudo-posterior over $\theta$. Parameters receiving high weight are those for which the simulator, after noise and selection, generates catalogs that are close in MMD to the observed sample. Since no tractable likelihood is required, the method lies within the broader class of simulation-based or likelihood-free inference procedures, but the loss is defined directly at the level of empirical distributions rather than through a hand-crafted low-dimensional summary.

For population-level arguments, it is useful to introduce the deterministic discrepancy
\begin{equation}
\Delta(\theta)
:=
\mathrm{MMD}_k^2\!\bigl(P_{\mathrm{obs}},P_\theta^{\mathrm{obs}}\bigr),
\label{eq:population-loss}
\end{equation}
together with the idealized population weight
\begin{equation}
L_\infty(\theta;\tau)
=
\exp\!\left\{-\frac{\Delta(\theta)}{2\tau^2}\right\}.
\label{eq:population-weight}
\end{equation}
The corresponding population pseudo-posterior is
\begin{equation}
\pi_\tau(d\theta)
\propto
p(\theta)L_\infty(\theta;\tau)\,d\theta
=
p(\theta)\exp\!\left\{-\frac{\Delta(\theta)}{2\tau^2}\right\}d\theta.
\label{eq:population-pseudo}
\end{equation}
This expression makes clear that the method favors parameter values whose induced observed-data laws lie close to $P_{\mathrm{obs}}$ in maximum mean discrepancy. As $\tau\downarrow 0$, the measure $\pi_\tau$ increasingly concentrates on minimizers of $\Delta(\theta)$, namely on those parameters whose forward-simulated observed distributions most closely reproduce the observed catalog.

It is convenient to define the MMD-identified set
\begin{equation}
\Theta^\dagger
:=
\arg\min_{\theta\in\Theta}\Delta(\theta)
=
\arg\min_{\theta\in\Theta}
\mathrm{MMD}_k^2\!\bigl(P_{\mathrm{obs}},P_\theta^{\mathrm{obs}}\bigr).
\label{eq:identified-set}
\end{equation}
If the simulator is correctly specified in the sense that there exists $\theta^\star\in\Theta$ with
\begin{equation}
P_{\theta^\star}^{\mathrm{obs}}=P_{\mathrm{obs}},
\end{equation}
then $\Delta(\theta^\star)=0$, so $\theta^\star\in\Theta^\dagger$. If $k$ is characteristic, then zero discrepancy implies equality of observed-data laws. This gives the method a much stronger identification target than approaches based on a single moment condition or an auxiliary regression summary. Nevertheless, $\Theta^\dagger$ may still be set-valued if multiple parameter values induce the same observed distribution after the observational pipeline has been applied. Thus the framework naturally accommodates both point identification and partial identification.

The cluster-cosmology setting provides a useful example. Suppose the latent population consists of galaxy clusters generated under cosmological and astrophysical parameters $\theta$ \citep{allen2011cosmological}, where $\theta$ may include quantities such as $\Omega_m$, $\sigma_8$, parameters controlling the halo mass function, and nuisance parameters governing baryonic or selection effects. At the latent level, the simulator produces a population of clusters with true masses, redshifts, temperatures, richnesses, X-ray luminosities, lensing observables, or Sunyaev--Zel'dovich signals. A real survey, however, does not record this latent population directly. Instead, it measures noisy proxies, applies instrumental response, imposes detection thresholds, and retains only objects satisfying catalog selection criteria. The vector $y^{\mathrm{obs}}\in\mathbb R^d$ may therefore contain only the observed catalog-level quantities, for example a stack of measured richness, redshift, X-ray temperature, and signal-to-noise summaries. In this setting, it would be inappropriate to compare the observed catalog directly to the latent simulator output. The scientifically correct comparison is between the real catalog and a simulated catalog obtained by running the latent cluster population through the same survey model. The MMD loss in \eqref{eq:empirical-loss} does exactly this: it quantifies how close the simulated observed cluster catalog is to the actual observed catalog, while retaining sensitivity to the full joint distribution of the recorded quantities.

This perspective also clarifies the distinction between the present method and auxiliary-model approaches such as indirect inference or regression-summary matching. In those methods, observed and simulated datasets are each compressed into a low-dimensional statistic, and inference proceeds by comparing those summaries. Here, by contrast, the discrepancy is defined directly on the empirical distributions of the observed vectors. 

Finally, once the weighted sample $\{(\theta_j,w_j)\}_{j=1}^{n_\theta}$ has been computed, it can be reused in exactly the same way as a posterior sample. Any posterior of interest may be approximated by weighted averages,
\begin{equation}
\int h(\theta)\,\Pi^{\mathrm{MMD}}(d\theta)
\approx
\sum_{j=1}^{n_\theta} w_j h(\theta_j),
\end{equation}
and posterior predictive quantities may be evaluated by forward simulation from the weighted parameter draws. Thus the calibration step need only be performed once, after which the weighted ensemble provides a flexible surrogate posterior for downstream uncertainty quantification and prediction.

\section{Proofs of Theoretical Guarantees}

\begin{proof}[Proof of Theorem~\ref{thm:mc}.]
Write
\begin{equation}
\widetilde w(\theta) = \exp\!\left\{-\frac{\Delta_{n,M}(\theta)}{2\tau^2}\right\}.
\end{equation}
Then
\begin{equation}
\sum_{j=1}^{n_\theta} w_j h(\theta_j) = \frac{n_\theta^{-1}\sum_{j=1}^{n_\theta} h(\theta_j)\widetilde w(\theta_j)}
     {n_\theta^{-1}\sum_{j=1}^{n_\theta} \widetilde w(\theta_j)}.
\end{equation}
Since $0<\widetilde w(\theta)\le 1$, both $h(\theta)\widetilde w(\theta)$ and $\widetilde w(\theta)$ are integrable under the prior whenever $h$ is integrable under $\pi_{n,M,\tau}$. By the strong law of large numbers,
\begin{equation}
\frac{1}{n_\theta}\sum_{j=1}^{n_\theta} h(\theta_j)\widetilde w(\theta_j)
 \xrightarrow{\mathrm{a.s.}} 
\int_\Theta h(\theta)\widetilde w(\theta)p(\theta)\,d\theta
\end{equation}
and
\begin{equation}
\frac1{n_\theta}\sum_{j=1}^{n_\theta} \widetilde w(\theta_j)
 \xrightarrow{\mathrm{a.s.}} 
\int_\Theta \widetilde w(\theta)p(\theta)\,d\theta.
\end{equation}
The denominator limit is strictly positive because $p(\theta)>0$ and $\widetilde w(\theta)>0$ on the compact set $\Theta$. Hence the ratio converges almost surely to
\begin{equation}
\frac{\int_\Theta h(\theta)\widetilde w(\theta)p(\theta)\,d\theta}
{\int_\Theta \widetilde w(\theta)p(\theta)\,d\theta} = \Phi_{n,M,\tau}(h),
\end{equation}
which proves the claim.
\end{proof}

\begin{remark} If $\Delta_{n,M}$ is taken to be the unbiased U-statistic estimator of $\mathrm{MMD}^2$, then $\Delta_{n,M}(\theta)$ may be negative in finite samples. In that case the empirical weight
\begin{equation}
L_{n,M,\tau}(\theta)
=
\exp\!\left\{-\frac{\Delta_{n,M}(\theta)}{2\tau^2}\right\}
\end{equation}
is no longer bounded by $1$. Consequently, finite-sample Monte Carlo consistency requires explicit integrability assumptions on $L_{n,M,\tau}$ and $hL_{n,M,\tau}$ under the prior. The fixed-$\tau$ convergence and posterior concentration results remain valid provided the empirical discrepancy converges uniformly to the nonnegative population loss, and in the shrinking-temperature regime one assumes
\begin{equation}
\sup_{\theta\in\Theta}|\Delta_{n,M}(\theta)-\Delta(\theta)|=\mathcal O_P(\tau_{n,M}^2).
\end{equation}
Alternatively, one may replace $\Delta_{n,M}$ by the nonnegative estimator $\max\{\Delta_{n,M},0\}$ or by the biased V-statistic estimator, which simplifies the weighting arguments.
\end{remark}

\begin{proof}[Proof of Theorem~\ref{thm:fixedtau}.]
Let
\begin{equation}
L_{n,M,\tau}(\theta) 
=
\exp\!\left\{-\frac{\Delta_{n,M}(\theta)}{2\tau^2}\right\},
\qquad
L_\tau(\theta) 
=
\exp\!\left\{-\frac{\Delta(\theta)}{2\tau^2}\right\}.
\end{equation}
Since $\Delta_{n,M}(\theta),\Delta(\theta)\ge 0$ and the map $x\mapsto e^{-x/(2\tau^2)}$ is Lipschitz on $[0,\infty)$ with constant $(2\tau^2)^{-1}$,
\begin{equation}
\sup_{\theta\in\Theta}|L_{n,M,\tau}(\theta)-L_\tau(\theta)|
\le
\frac{1}{2\tau^2}
\sup_{\theta\in\Theta}|\Delta_{n,M}(\theta)-\Delta(\theta)|
\xrightarrow{P}0.
\end{equation}
Now
\begin{equation}
\left|
\int_\Theta h(\theta)p(\theta)L_{n,M,\tau}(\theta)\,d\theta
-
\int_\Theta h(\theta)p(\theta)L_\tau(\theta)\,d\theta
\right|
\end{equation}
\begin{equation}
\le
\int_\Theta |h(\theta)|p(\theta)\,
|L_{n,M,\tau}(\theta)-L_\tau(\theta)|\,d\theta
\le
\|h\|_\infty
\sup_{\theta\in\Theta}|L_{n,M,\tau}(\theta)-L_\tau(\theta)|
\int_\Theta p(\theta)\,d\theta,
\end{equation}
which converges to $0$ in probability. Similarly,
\begin{equation}
\left|
\int_\Theta p(\theta)L_{n,M,\tau}(\theta)\,d\theta
-
\int_\Theta p(\theta)L_\tau(\theta)\,d\theta
\right|
\le
\sup_{\theta\in\Theta}|L_{n,M,\tau}(\theta)-L_\tau(\theta)|
\int_\Theta p(\theta)\,d\theta
\xrightarrow{P}0.
\end{equation}
Finally,
\begin{equation}
\int_\Theta p(\theta)L_\tau(\theta)\,d\theta>0
\end{equation}
because $p(\theta)\ge 0$, $p$ is not identically zero, and $L_\tau(\theta)>0$ for all $\theta$. Therefore Slutsky's theorem implies
\begin{equation}
\frac{\int_\Theta h(\theta)p(\theta)L_{n,M,\tau}(\theta)\,d\theta}
{\int_\Theta p(\theta)L_{n,M,\tau}(\theta)\,d\theta}
\xrightarrow{P}
\frac{\int_\Theta h(\theta)p(\theta)L_\tau(\theta)\,d\theta}
{\int_\Theta p(\theta)L_\tau(\theta)\,d\theta},
\end{equation}
which is the desired conclusion.
\end{proof}

\begin{proof}[Proof of Theorem~\ref{thm:consistency}.]
Recall that
\begin{equation}
\Theta^\dagger
=
\arg\min_{\theta\in\Theta}\Delta(\theta),
\qquad
m^\star:=\inf_{\theta\in\Theta}\Delta(\theta).
\end{equation}
Under Assumptions~\ref{ass:param}--\ref{ass:uniform}, the map $\theta\mapsto\Delta(\theta)$ is continuous on the compact set $\Theta$, so $\Theta^\dagger$ is nonempty and compact, and the infimum $m^\star$ is attained.

Fix an open neighborhood $U\supset\Theta^\dagger$. By Assumption~\ref{ass:ident}, there exists $\eta=\eta(U)>0$ such that
\begin{equation}
\inf_{\theta\notin U}\Delta(\theta)\ge m^\star+\eta.
\label{eq:gap-proof}
\end{equation}
Because $\Theta^\dagger\subset U$ and $\Theta^\dagger\neq\varnothing$, we may choose $\theta_U\in\Theta^\dagger\subset U$. Then
\begin{equation}
\Delta(\theta_U)=m^\star.
\end{equation}
Since $U$ is open and $\Delta$ is continuous, there exists a nonempty open neighborhood $V\subset U$ of $\theta_U$ such that
\begin{equation}
\sup_{\theta\in V}\Delta(\theta)\le m^\star+\frac{\eta}{2}.
\label{eq:V-bound}
\end{equation}

Because $V$ is a nonempty open subset of $\Theta\subset\mathbb R^p$, it has positive Lebesgue measure. Since the prior density $p$ is continuous and strictly positive on $\Theta$, it follows that
\begin{equation}
c_V:=\int_V p(\theta)\,d\theta>0.
\label{eq:cV}
\end{equation}

Now, define the uniform stochastic error
\begin{equation}
\varepsilon_{n,M}
:=
\sup_{\theta\in\Theta}\bigl|\Delta_{n,M}(\theta)-\Delta(\theta)\bigr|.
\end{equation}
By Assumption~\ref{ass:rate},
\begin{equation}
\varepsilon_{n,M}=o_P(\tau_{n,M}^2).
\end{equation}
Consider the event
\begin{equation}
A_{n,M} = \left\{ \varepsilon_{n,M}\le \frac{\eta}{8}\tau_{n,M}^2 \right\}.
\label{eq:AnM}
\end{equation}
Then
\begin{equation}
\Pr(A_{n,M})\longrightarrow 1.
\end{equation}

On the event $A_{n,M}$, for every $\theta\in V$, combining Equation~\eqref{eq:V-bound} with the definition of $\varepsilon_{n,M}$ gives
\begin{align}
\Delta_{n,M}(\theta)
&\le \Delta(\theta)+\varepsilon_{n,M} \nonumber\\
&\le m^\star+\frac{\eta}{2}+\frac{\eta}{8}\tau_{n,M}^2.
\label{eq:upper-V}
\end{align}
Likewise, for every $\theta\notin U$, combining Equation~\eqref{eq:gap-proof} with the definition of $\varepsilon_{n,M}$ gives
\begin{align}
\Delta_{n,M}(\theta)
&\ge \Delta(\theta)-\varepsilon_{n,M} \nonumber\\
&\ge m^\star+\eta-\frac{\eta}{8}\tau_{n,M}^2.
\label{eq:lower-outside}
\end{align}

Therefore, on $A_{n,M}$,
\begin{align}
\sup_{\theta\notin U}\exp\!\left\{-\frac{\Delta_{n,M}(\theta)}{2\tau_{n,M}^2}\right\}
&\le
\exp\!\left\{-\frac{m^\star+\eta-\eta\tau_{n,M}^2/8}{2\tau_{n,M}^2}\right\},
\label{eq:num-bound}
\\
\inf_{\theta\in V}\exp\!\left\{-\frac{\Delta_{n,M}(\theta)}{2\tau_{n,M}^2}\right\}
&\ge
\exp\!\left\{-\frac{m^\star+\eta/2+\eta\tau_{n,M}^2/8}{2\tau_{n,M}^2}\right\}.
\label{eq:den-bound}
\end{align}

Using these bounds, we obtain on $A_{n,M}$,
\begin{align*}
\pi_{n,M,\tau_{n,M}}(\Theta\setminus U)
&=
\frac{\int_{\Theta\setminus U} p(\theta)\exp\{-\Delta_{n,M}(\theta)/(2\tau_{n,M}^2)\}\,d\theta}
{\int_{\Theta} p(\theta)\exp\{-\Delta_{n,M}(\theta)/(2\tau_{n,M}^2)\}\,d\theta}
\\
&\le
\frac{\int_{\Theta\setminus U} p(\theta)\exp\{-\Delta_{n,M}(\theta)/(2\tau_{n,M}^2)\}\,d\theta}
{\int_{V} p(\theta)\exp\{-\Delta_{n,M}(\theta)/(2\tau_{n,M}^2)\}\,d\theta}
\\
&\le
\frac{\exp\!\left\{-\frac{m^\star+\eta-\eta\tau_{n,M}^2/8}{2\tau_{n,M}^2}\right\}
\int_{\Theta\setminus U} p(\theta)\,d\theta}
{\exp\!\left\{-\frac{m^\star+\eta/2+\eta\tau_{n,M}^2/8}{2\tau_{n,M}^2}\right\}
\int_V p(\theta)\,d\theta}.
\end{align*}
Since $\int_{\Theta\setminus U}p(\theta)\,d\theta\le 1$ and $\int_V p(\theta)\,d\theta=c_V$, this yields
\begin{align}
\pi_{n,M,\tau_{n,M}}(\Theta\setminus U)
&\le
\frac{1}{c_V}
\exp\!\left\{
-\frac{\eta/2-(\eta/4)\tau_{n,M}^2}{2\tau_{n,M}^2}
\right\}
\nonumber\\
&=
\frac{1}{c_V}
\exp\!\left\{
-\frac{\eta}{4\tau_{n,M}^2}+\frac{\eta}{8}
\right\}.
\label{eq:final-bound}
\end{align}
Because $\tau_{n,M}\downarrow0$, the right-hand side converges to $0$. Hence
\begin{equation}
\pi_{n,M,\tau_{n,M}}(\Theta\setminus U)\xrightarrow{P}0,
\end{equation}
and therefore
\begin{equation}
\pi_{n,M,\tau_{n,M}}(U)\xrightarrow{P}1,
\end{equation}
which proves the claim in Equation~\eqref{eq:post-consistency}.

For the equivalent formulation, let $F\subset\Theta$ be closed with $F\cap\Theta^\dagger=\varnothing$. Then $U:=\Theta\setminus F$ is open and contains $\Theta^\dagger$. Applying Equation~\eqref{eq:post-consistency} to this set $U$ gives
\begin{equation}
\pi_{n,M,\tau_{n,M}}(F)
=
1-\pi_{n,M,\tau_{n,M}}(U)
\xrightarrow{P}0,
\end{equation}
which proves the claim in Equation~\eqref{eq:closed-consistency}.
\end{proof}

\begin{proof}[Proof of Proposition~\ref{prop:zero}.]
By Assumption~\ref{ass:correct}, the observed distribution is correctly specified at $\theta^\star$, that is,
\[
P_{\theta^\star}^{\mathrm{obs}} = P_{\mathrm{obs}}.
\]
Therefore,
\[
\Delta(\theta^\star) = \mathrm{MMD}_k^2\!\left(P_{\mathrm{obs}}, P_{\theta^\star}^{\mathrm{obs}}\right) = \mathrm{MMD}_k^2\!\left(P_{\mathrm{obs}}, P_{\mathrm{obs}}\right).
\]
Now, by the defining properties of MMD under noisy data (see Theorem~3.9 in \cite{vashistha2026convolutional}), we have
\[
\Delta(\theta^\star)=\mathrm{MMD}_k^2(P_{\mathrm{obs}},P_{\mathrm{obs}})=0.
\]

On the other hand, by Assumption~\ref{ass:kernel}, $\mathrm{MMD}_k^2(\cdot,\cdot)$ is well defined and nonnegative, so
\[
\Delta(\theta)\ge 0
\qquad
\text{for all } \theta\in\Theta.
\]
Since $\Delta(\theta^\star)=0$, it follows that $\inf_{\theta\in\Theta}\Delta(\theta)=0$.

Finally, by definition, $\Theta^\dagger=\arg\min_{\theta\in\Theta}\Delta(\theta)$.
Because $\theta^\star\in\Theta$ and $\Delta(\theta^\star)=0=\inf_{\theta\in\Theta}\Delta(\theta)$, we conclude that $\theta^\star\in\Theta^\dagger$.
\end{proof}

\begin{proof}[Proof of Proposition~\ref{prop:point}.]
By Proposition~\ref{prop:zero}, we have
\[
\inf_{\theta\in\Theta}\Delta(\theta)=0
\qquad\text{and}\qquad
\theta^\star\in\Theta^\dagger.
\]
In particular,
\[
\Delta(\theta^\star)=0.
\]

Now let $\theta\in\Theta^\dagger$. Since $\Theta^\dagger=\arg\min_{\vartheta\in\Theta}\Delta(\vartheta)$ and the minimum value is $0$, it follows that
\[
\Delta(\theta)=0.
\]
By definition of $\Delta$,
\[
\Delta(\theta)
=
\mathrm{MMD}_k^2\!\left(P_{\mathrm{obs}},P_\theta^{\mathrm{obs}}\right)=0.
\]
Because $k$ is characteristic, $\mathrm{MMD}_k(P,Q)=0$ implies $P=Q$. Hence, $P_\theta^{\mathrm{obs}}=P_{\mathrm{obs}}$. Using Assumption~\ref{ass:correct}, we also have $P_{\theta^\star}^{\mathrm{obs}}=P_{\mathrm{obs}}$. Therefore, $P_\theta^{\mathrm{obs}}=P_{\theta^\star}^{\mathrm{obs}}$. 

Since the map $\theta\mapsto P_\theta^{\mathrm{obs}}$ is injective on $\Theta$, this implies $\theta=\theta^\star$. Thus, every $\theta\in\Theta^\dagger$ equals $\theta^\star$, so $\Theta^\dagger=\{\theta^\star\}$.

For the concentration statement, let $U$ be any open neighborhood of $\theta^\star$. Since $\Theta^\dagger=\{\theta^\star\}$, we have $U\supset \Theta^\dagger$. Theorem~\ref{thm:consistency} therefore yields
\[
\pi_{n,M,\tau_{n,M}}(U)\xrightarrow[n,M\to\infty]{P}1.
\]
This completes the proof.
\end{proof}

\begin{proof}[Proof of Corollary~
\ref{cor:two-stage}.]
We prove the fixed-temperature and vanishing-temperature statements separately.

For the first claim, fix $(n,M,\tau)$ with $\tau>0$, and define
\begin{equation}
\widehat\Phi_{n_\theta,n,M,\tau}(h)
:=
\sum_{j=1}^{n_\theta} w_j h(\theta_j)
=
\int_\Theta h(\theta)\,\Pi_{n_\theta}^{\mathrm{MMD}}(d\theta).
\end{equation}
Since $h$ is bounded, it is integrable with respect to $\pi_{n,M,\tau}$. Therefore Theorem~\ref{thm:mc} applies and yields
\begin{equation}
\widehat\Phi_{n_\theta,n,M,\tau}(h)
-
\Phi_{n,M,\tau}(h)
\xrightarrow[n_\theta\to\infty]{\mathrm{a.s.}}0,
\end{equation}
where
\begin{equation}
\Phi_{n,M,\tau}(h)
=
\int_\Theta h(\theta)\,\pi_{n,M,\tau}(d\theta).
\end{equation}
This proves \eqref{eq:two-stage-mc}.

We now prove \eqref{eq:two-stage-fixedtau}. Write
\begin{equation}
\widehat\Phi_{n_\theta,n,M,\tau}(h)-\Phi_\tau(h)
=
\Bigl(\widehat\Phi_{n_\theta,n,M,\tau}(h)-\Phi_{n,M,\tau}(h)\Bigr)
+
\Bigl(\Phi_{n,M,\tau}(h)-\Phi_\tau(h)\Bigr),
\end{equation}
where
\begin{equation}
\Phi_\tau(h):=\int_\Theta h(\theta)\,\pi_\tau(d\theta).
\end{equation}
For the second term, Theorem~\ref{thm:fixedtau} gives
\begin{equation}
\Phi_{n,M,\tau}(h)-\Phi_\tau(h)\xrightarrow[n,M\to\infty]{P}0.
\end{equation}
For the first term, Theorem~\ref{thm:mc} gives almost sure convergence to zero as $n_\theta\to\infty$ for each fixed $(n,M,\tau)$. Hence, along any sequence with $n_\theta\to\infty$, the first term converges to zero in probability. Therefore, by Slutsky's theorem,
\begin{equation}
\widehat\Phi_{n_\theta,n,M,\tau}(h)-\Phi_\tau(h)
\xrightarrow[n,M\to\infty]{P}0,
\end{equation}
which proves \eqref{eq:two-stage-fixedtau}.

We turn to the concentration statement \eqref{eq:two-stage-concentration}. Let $U\supset\Theta^\dagger$ be open. Since $\Pi_{n_\theta}^{\mathrm{MMD}}(U)=\sum_{j=1}^{n_\theta}w_j\mathbf 1_U(\theta_j)$, we may apply Theorem~\ref{thm:mc} with $h=\mathbf 1_U$. Because $0\le \mathbf 1_U\le 1$, the integrability condition is immediate. Thus, for each fixed $(n,M,\tau_{n,M})$,
\begin{equation}
\Pi_{n_\theta}^{\mathrm{MMD}}(U)-\pi_{n,M,\tau_{n,M}}(U)
\xrightarrow[n_\theta\to\infty]{\mathrm{a.s.}}0.
\label{eq:indicator-mc}
\end{equation}
On the other hand, by Theorem~\ref{thm:consistency},
\begin{equation}
\pi_{n,M,\tau_{n,M}}(U)\xrightarrow[n,M\to\infty]{P}1.
\label{eq:empirical-post-concentration}
\end{equation}
Therefore,
\begin{equation}
\Pi_{n_\theta}^{\mathrm{MMD}}(U)-1
=
\Bigl(\Pi_{n_\theta}^{\mathrm{MMD}}(U)-\pi_{n,M,\tau_{n,M}}(U)\Bigr)
+
\Bigl(\pi_{n,M,\tau_{n,M}}(U)-1\Bigr).
\end{equation}
The first term converges to zero in probability because of \eqref{eq:indicator-mc}, and the second term converges to zero in probability because of \eqref{eq:empirical-post-concentration}. Another application of Slutsky's theorem yields
\begin{equation}
\Pi_{n_\theta}^{\mathrm{MMD}}(U)\xrightarrow[n,M\to\infty]{P}1,
\end{equation}
which proves \eqref{eq:two-stage-concentration}.
\end{proof}

\section{Experiment with EIV in Linear Setting}
\label{app:EIV_linear}

\paragraph{Experimental Setup and Benchmarking Protocol.} We evaluate the proposed MMD-based likelihood-free inference framework in a controlled simulation setting designed to closely mirror the problem formulation introduced in Section~\ref{sec:setup}. In particular, the experiment explicitly incorporates measurement error, heteroscedastic noise, and non-Gaussian latent structure, so that inference must operate on observed data distributions rather than latent quantities. The goal is to assess both the accuracy of parameter recovery and the calibration of uncertainty quantification under realistic observational distortions.

\paragraph{Data-Generating Process.} For each parameter setting $\theta = (\alpha, \beta, \sigma)$, we generate a latent population $\{(x_i^{\mathrm{true}}, y_i^{\mathrm{true}})\}_{i=1}^n$. The latent covariate $x_i^{\mathrm{true}}$ is drawn from a two-component Gaussian mixture,
\begin{equation}
x_i^{\mathrm{true}} \sim \pi \,\mathcal{N}(\mu_1, s_1^2) + (1-\pi)\,\mathcal{N}(\mu_2, s_2^2),
\end{equation}
with $\pi = 0.55$, $(\mu_1, \mu_2) = (-1.2, 1.0)$, and $(s_1, s_2) = (0.95, 0.95)$. This induces a mildly multimodal and overlapping distribution, ensuring that the latent covariate distribution is non-Gaussian and cannot be captured by simple parametric assumptions.

Conditional on $x_i^{\mathrm{true}}$, the latent response is generated via a linear model,
\begin{equation}
y_i^{\mathrm{true}} = \alpha + \beta x_i^{\mathrm{true}} + \epsilon_i,
\end{equation}
where $\epsilon_i$ represents intrinsic scatter. We consider two noise regimes: a Gaussian model $\epsilon_i \sim \mathcal{N}(0, \sigma^2)$ and a Laplace model $\epsilon_i \sim \mathrm{Laplace}(0, \sigma/\sqrt{2})$, so that both cases have variance $\sigma^2$. Throughout, we fix $\alpha = 1.0$, $\beta = 2.0$, and $\sigma = 0.8$ unless otherwise specified.

Observed data are obtained by applying a heteroscedastic measurement process,
\begin{equation}
x_i^{\mathrm{obs}} = x_i^{\mathrm{true}} + \eta_{x,i}, 
\qquad
y_i^{\mathrm{obs}} = y_i^{\mathrm{true}} + \eta_{y,i},
\end{equation}
with $\eta_{x,i} \sim \mathcal{N}(0, \sigma_{x,i}^2)$ and $\eta_{y,i} \sim \mathcal{N}(0, \sigma_{y,i}^2)$. The noise levels vary across observations according to
\begin{equation}
\sigma_{x,i} = 1.5 + 0.85\ \frac{|x_i^{\mathrm{true}}-2|}{\max_j |x_j^{\mathrm{true}}|+ 10^{-8}}, 
\qquad
\sigma_{y,i} = 1.5 + 1.20 \ \frac{|y_i^{\mathrm{true}}|}{\max_j |y_j^{\mathrm{true}}| + 10^{-8}}.
\end{equation}

This construction yields large and strongly heteroscedastic measurement errors in both variables. The resulting observed dataset
\[
\mathcal{D}_{\mathrm{obs}} = \{(x_i^{\mathrm{obs}}, y_i^{\mathrm{obs}}, \sigma_{x,i}, \sigma_{y,i})\}_{i=1}^n,
\]
with $n = 180$, corresponds to draws from the observed-data distribution $P_{\mathrm{obs}}$ defined in Section~\ref{sec:setup}.

\paragraph{Inference Methods.} We compare the proposed MMD-based method to a range of classical and Bayesian approaches for regression with measurement error. All methods are applied directly to the observed dataset $\mathcal{D}_{\mathrm{obs}}$.

The proposed method constructs a pseudo-posterior over $\theta$ by comparing the empirical observed distribution $\widehat{P}_{\mathrm{obs}}$ to simulated observed distributions $\widehat{P}_{\theta,M}^{\mathrm{sim,obs}}$. For each sampled parameter value $\theta_j$, we generate a forward-simulated observed dataset by drawing latent variables from $P_\theta^{\mathrm{true}}$ and applying the observation model $P_\theta^{\mathrm{Err}}$. The discrepancy is measured using the squared maximum mean discrepancy,
\begin{equation}
\Delta_{n,M}(\theta) = \mathrm{MMD}_k^2\bigl(\widehat{P}_{\mathrm{obs}}, \widehat{P}_{\theta,M}^{\mathrm{sim,obs}}\bigr),
\end{equation}
and parameters are assigned weights
\begin{equation}
w_j \propto \exp\left(-\frac{\Delta_{n,M}(\theta_j)}{2\tau^2}\right),
\end{equation}
yielding a weighted empirical approximation to the pseudo-posterior. The kernel bandwidth is selected using the median heuristic.

As a state-of-the-art baseline, we employ LinMix \citep{kelly2007some}, a hierarchical Bayesian regression model that accounts for measurement error in both variables and models latent covariates using a Gaussian mixture. Posterior inference is performed using a Gibbs sampler, yielding samples of $(\alpha, \beta, \sigma^2)$.\footnote{\url{https://github.com/jmeyers314/linmix}}

We further include several classical errors-in-variables estimators. Deming regression minimizes orthogonal residuals under a fixed ratio of error variances, which we estimate using the empirical ratio $\mathbb{E}[\sigma_{y,i}^2]/\mathbb{E}[\sigma_{x,i}^2]$ \citep{deming1943statistical}. BCES (Bivariate Correlated Errors and intrinsic Scatter, \citep{akritas1996linear}) provides a regression estimator that accounts for heteroscedastic measurement errors and intrinsic scatter; we use the BCES(Y$|$X) variant implemented in the Python \texttt{bces} package.\footnote{\url{https://pypi.org/project/bces/}} Orthogonal Distance Regression (ODR, \citep{boggs1990orthogonal}), implemented via SciPy, solves a weighted nonlinear least squares problem that accounts for measurement errors in both variables by minimizing orthogonal distances.

We also consider SIMEX (simulation extrapolation, \citep{cook1994simulation}), which corrects for measurement error by adding artificial noise to the observed covariates, fitting a naive model at multiple noise levels, and extrapolating the resulting estimates to the zero-noise limit. Finally, we include ordinary least squares (OLS) as a baseline that ignores measurement error entirely.

For methods without an explicit posterior distribution, including Deming, BCES, and SIMEX, uncertainty is estimated via bootstrap resampling. For ODR and OLS, uncertainty is approximated using asymptotic Gaussian approximations.

\paragraph{Evaluation Metrics.} We evaluate each method in terms of both point estimation accuracy and uncertainty quantification. For each parameter $\theta \in \{\alpha, \beta, \sigma^2\}$, we compute the bias
\begin{equation}
\mathrm{Bias}(\theta) = \widehat{\theta}_{\mathrm{mean}} - \theta^{\mathrm{true}},
\end{equation}
where $\widehat{\theta}_{\mathrm{mean}}$ denotes the posterior or bootstrap mean estimate. We construct 90\% uncertainty intervals $[\theta_{0.05}, \theta_{0.95}]$ and evaluate coverage via the indicator
\begin{equation}
\mathbb{I}\bigl(\theta^{\mathrm{true}} \in [\theta_{0.05}, \theta_{0.95}]\bigr).
\end{equation}
Coverage is reported as the average of this indicator across repeated experiments. For the intrinsic scatter parameter, we evaluate $\sigma^2$ rather than $\sigma$, since it is the natural scale parameter of the likelihood.

\paragraph{Multi-Realization Protocol.} To assess robustness and calibration, we repeat the experiment over $R = 10$ independent realizations. For each realization $r = 1, \dots, R$, we generate a dataset $\mathcal{D}_{\mathrm{obs}}^{(r)}$, fit all methods, and compute posterior summaries and uncertainty intervals. Results are then aggregated across realizations to estimate mean bias, variability of estimates, and empirical coverage.

\paragraph{Discussion.} This experimental setup isolates several key challenges relevant to likelihood-free inference: non-Gaussian latent structure, heteroscedastic and large measurement error, and potential misspecification of intrinsic noise. Classical methods rely on explicit likelihood assumptions and parametric forms for latent distributions, whereas the proposed MMD-based approach directly compares observed-data distributions. By operating at the level of $P_{\mathrm{obs}}$ rather than latent variables, the method naturally incorporates the full observation pipeline and remains robust to misspecification of latent structure and noise distributions.

\begin{figure}
    \centering
    \includegraphics[width=0.75\linewidth]{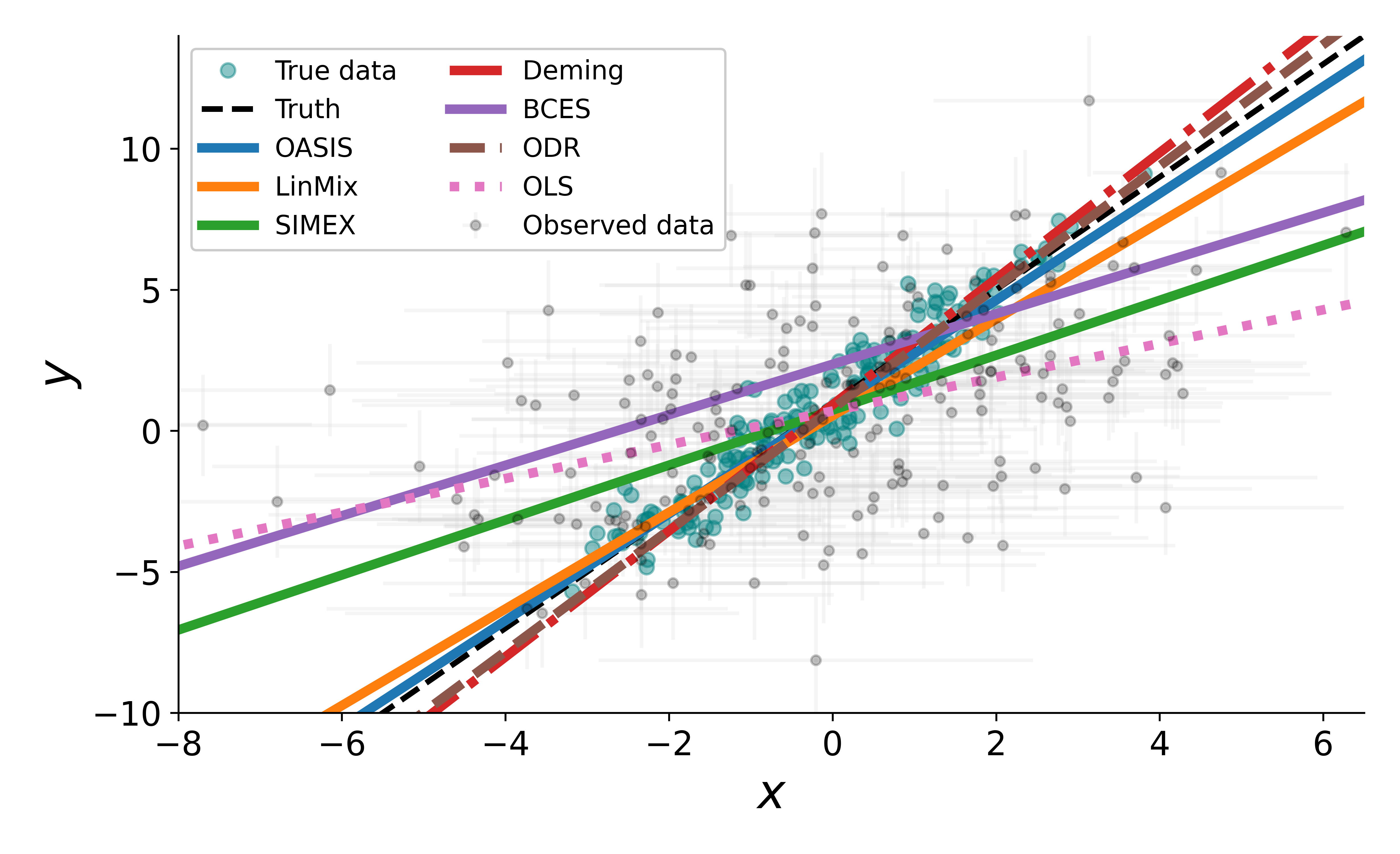}
    \caption{Realization of 180 data points using the setup described in Appendix~\ref{app:EIV_linear}. We compare the estimated linear line fitted against noisy data using benchmark methods.}
    \label{fig:placeholder}
\end{figure}

\begin{figure}
    \centering
    \includegraphics[width=0.98\linewidth]{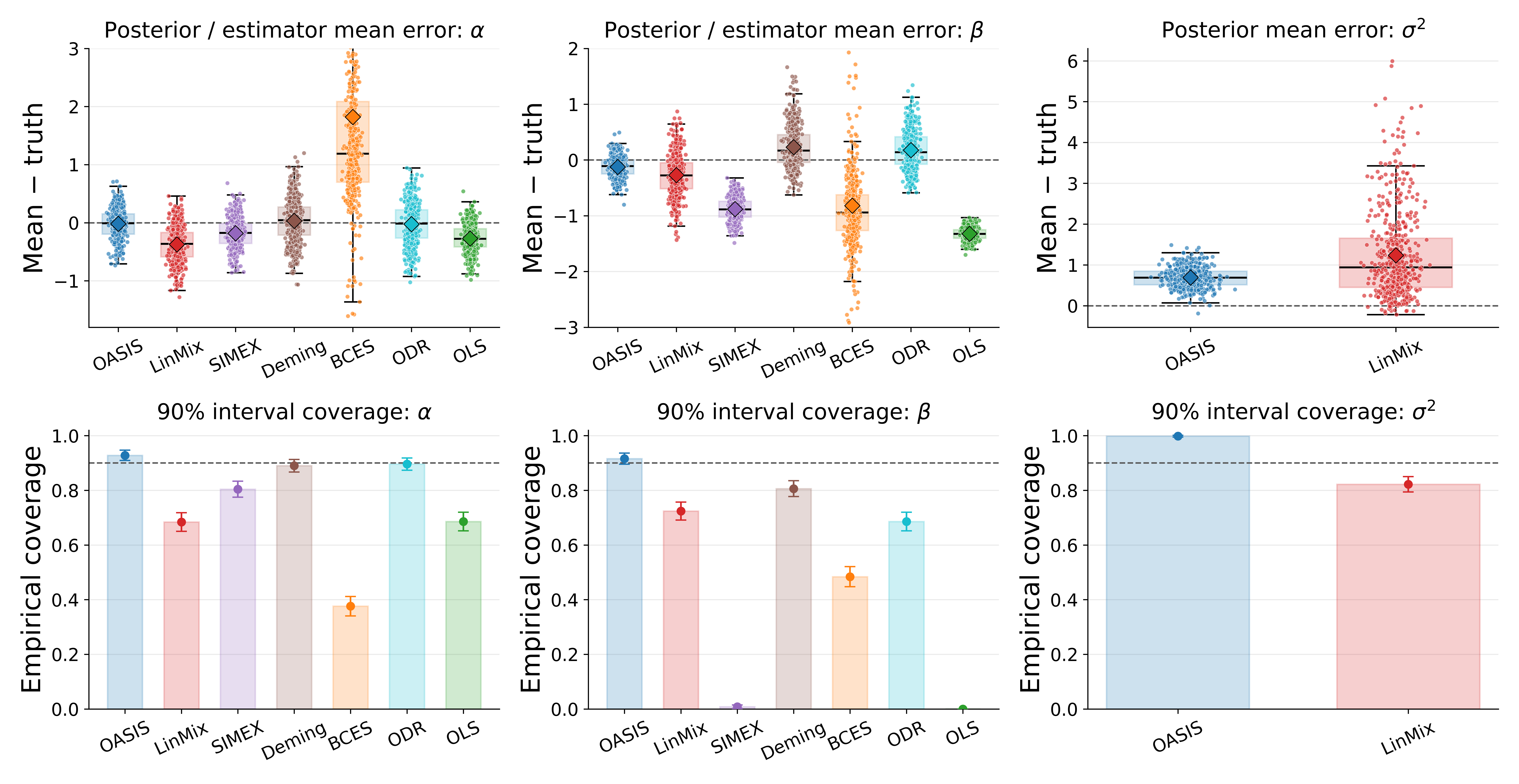} \\
    \includegraphics[width=0.98\linewidth]{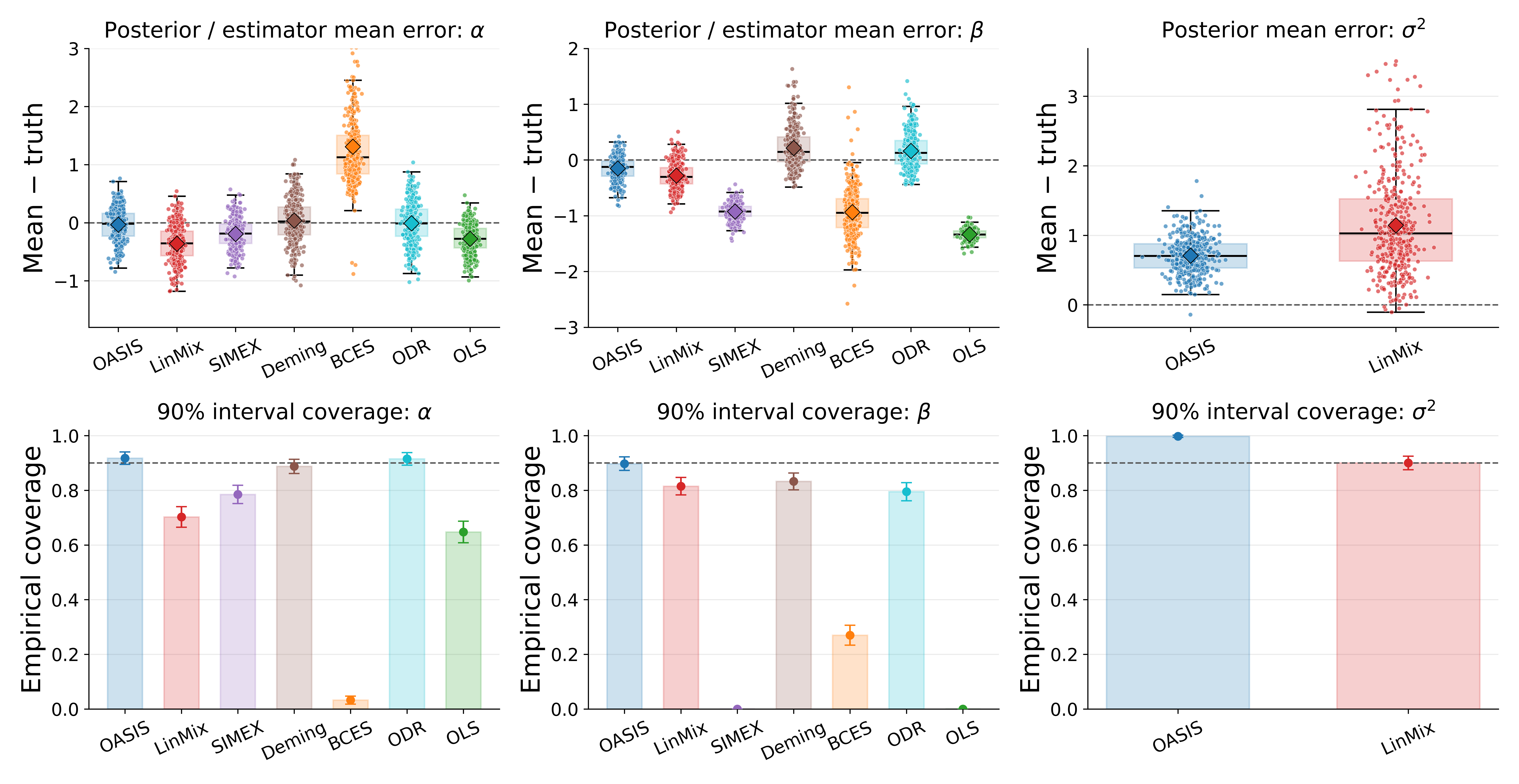}
    \caption{Same as Figure~\ref{fig:EIV_linear_gaussian} but with Laplace (top panels) and Uniform (bottom panels) measurement error.}
    \label{fig:EIV_linear_laplace_uniform}
\end{figure}

\section{Real-world Illustration with Galaxy Cluster Observables}
\label{app:cluster_dgp}

\paragraph{Experimental Setup and Benchmarking Protocol.} 

\paragraph{Data-Generating Process.} We consider four observables: weak lensing mass, optical richness, thermal Sunyaev-Zeldovich effect (SZe) signal-to-noise (SNR), and X-ray count rate $C_{\mathrm{R}}$. In general, cluster cosmology starts from a halo mass function, 

\begin{equation}
    N(m, z) = N_{\mathrm{tot}} P(m,z)
\end{equation}

in which $N_{\mathrm{tot}}$ is the total number of halos within the survey volume, and a redshift range, and $P(m, z)$ is the normalized mass and redshift distribution of dark matter halos. $m$ is the natural logarithm of the mass of the halo, and $z$ is the redshift of the dark matter halo. In general, the halo mass function is a function of the amplitude of density fluctuations $\sigma_8$, total number density $\Omega_{\mathrm{m}}$, the dark energy equation of state $w_0, w_a$, the sum of neutrino masses $\Sigma m_\nu$, and the Hubble constant $H_0$. For the current data, clusters can constrain $\sigma_8$ and $\Omega_{\mathrm m}$ best. In this paper, we will only free $\sigma_8$ and $\Omega_{\mathrm m}$ and keep the rest of the cosmology parameters fixed. 

The underlying true properties of a population of clusters can be described with 

\begin{equation}
    \ln \boldsymbol{y}_i^{\text {true }}=\boldsymbol{\alpha}+ \boldsymbol{\beta} \ln m_i +  \boldsymbol{\gamma} z_i + \boldsymbol{\epsilon}_i, 
\end{equation}

in which $\boldsymbol{y}_i = \{\mathcal{M}^{\mathrm{true}}_i, \lambda_i^{\mathrm{true}}, \xi^{\mathrm{true}}_i, C^{\mathrm{true}}_i \}$ are the true properties of cluster $i$. $\mathcal{M}^{\mathrm{true}}$ is the true weak lensing mass, $\lambda_i^{\mathrm{true}}$ is the true richness, $\xi_i^{\mathrm{true}}$ is the true tSZ SNR, and $C_i^{\mathrm{true}}$ is the true X-ray count rate. The intrinsic scatter and correlation of the true properties are described by 

\begin{equation}
    \boldsymbol{\epsilon}_i \sim \mathcal{N}\left(\mathbf{0}, \boldsymbol{\Sigma}_{\mathrm{int}}\right).
\end{equation}

The observed richness is subject to projection effects that add a heteroscedastic noise that increase the scatter of low-richness clusters \citep{murataConstraintsMassRichnessRelation2018,mylesSpectroscopicQuantificationProjection2021, mylesSpectroscopicCharacterizationRedMaPPer2025}. We parametrized it in the form  

\begin{equation}
    \ln \lambda^{\mathrm{obs}}_i \mid \ln \lambda^{\mathrm{true}}_i \sim \mathcal{N}(\ln \lambda^{\mathrm{true}}_i , a (\ln \lambda^{\mathrm{true}}_i)^2 + b \ln \lambda^{\mathrm{true}}_i ),
\end{equation}

in which $a$ and $b$ are coefficients that control the amplitude of the mass-dependent additional scatter at the low mass end. 

To account for the bias introduced by optimizing over three degrees of freedom, the intrinsic SZe SNR is connected to the observed SZe SNR with 

\begin{equation}
    \xi^{\mathrm{obs}}_i \mid \xi^{\mathrm{true}}_i \sim \mathcal{N}\left(\sqrt{(\xi^{\mathrm{true}})_i^2+3}, 1\right).
\end{equation}

The connection between intrinsic weak lensing mass and the observed weak lensing mass is subject to residual observational effects such as miscentering \citep{zhangDarkEnergySurveyed2019, kellyDarkEnergySurvey2023}, scale-dependent shear calibration bias, and cluster member contamination\citep{vargaDarkEnergySurvey2019}. We model it as a bias plus scatter model 

\begin{equation}
\ln \mathcal{M}^{\mathrm{obs}}_i \mid \ln \mathcal{M}^{\mathrm{true}}_i \sim \mathcal{N}\left(\ln \mathcal{M}^{\mathrm{true}}_i + c (\ln \mathcal{M}^\mathrm{true}_i)^2 + d \ln \mathcal{M}^\mathrm{true}_i, \sigma_{\mathcal{M}}^2 \right), 
\end{equation}

in which $c$ and $d$ describe the amplitude of the mass-dependent bias, and $\sigma_{\mathcal{M}}$ is the observational noise. 

The relation between the true count rate $C^{\mathrm{true}}_i$ and the observed count rate $C^{\mathrm{obs}}_i$ is complicated and requires full physical modeling to capture \citep{sandersHydrostaticChandraXray2018, bulbulSRGEROSITAAllSky2024}. In this work, we use a simple log-normal model 

\begin{equation}
\ln C^{\mathrm{obs}} \mid \ln C^{\mathrm{true}} \sim \mathcal{N}\left(\ln C^{\mathrm{true}}, \sigma_C \right)
\end{equation}

In this experiment, we assume all clusters above a richness threshold are detected in all wavelengths. In the future, we plan to incorporate missing and censored data into the framework.

\paragraph{Benchmarking Protocol.}
We restrict inference to two cosmological parameters, $\theta = (\Omega_m h^2,\,\sigma_8)$, placing flat priors $\Omega_m h^2 \in [0.12, 0.155]$ and $\sigma_8 \in [0.70, 0.90]$ and fixing all scaling-relation nuisance parameters at their fiducial values. Each synthetic observation contains ${\sim}1{,}700$ clusters with five observables per object, $(z,\,\ln\lambda^{\mathrm{obs}},\,\xi,\,\ln\mathcal{M}^{\mathrm{obs}},\,\ln C^{\mathrm{obs}})$, drawn from a $200\,\mathrm{deg}^2$ survey over $z\in[0.1,1.0]$ subject to the richness cut $\lambda^{\mathrm{obs}}\geq20$. The halo mass function is evaluated with the \textsc{MiraTitan} emulator \citep{bocquetMiraTitanUniverseIII2020}. \texttt{OASIS} draws $n_\theta=2{,}000$ proposals from the prior and evaluates $\widehat\Delta(\theta_j)=\widehat{\mathrm{MMD}}^2_k\!\bigl(\widehat P_{\mathrm{obs}},\widehat P_{\theta_j}^{\mathrm{sim}}\bigr)$ using an isotropic RBF kernel $k(\mathbf{x},\mathbf{y})=\exp(-\gamma\|\mathbf{x}-\mathbf{y}\|^2)$ applied jointly to the five-dimensional observable vector, with bandwidth $\gamma=1/(2\hat h^2)$ set by the median heuristic. We set $\tau = 10\,\hat\tau$ where $\hat\tau$ is the heuristic defined in Section~\ref{sec:setup}. We compare against Neural Posterior Estimation (\textbf{NPE}) and Neural Likelihood Estimation (\textbf{NLE}) \citep{cranmer2020frontier} from the \texttt{sbi} library \citep{tejero2020sbi}, each trained on 5{,}000 simulations using a 50-dimensional summary statistic comprising a 2D $(z,\ln\lambda^{\mathrm{obs}})$ histogram, marginal histograms of all five observables, pairwise Pearson correlations, and log-cluster-count features. Performance is assessed over 50 independent realizations, each with a distinct true cosmology drawn uniformly from the prior.

\paragraph{Summary of Results.}
All three methods identify $\sigma_8$ as the primary constraint, as the halo mass function is exponentially sensitive to the amplitude of matter fluctuations. The posterior in $\Omega_m h^2$, however, behaves differently across methods. \texttt{OASIS} shows a nearly prior-width posterior in $\Omega_m h^2$, because the MMD compares two \emph{normalized} empirical measures — dividing by the catalog size $n$ before computing kernel averages — and therefore operates entirely on the per-cluster observable \emph{shape} $(z,\ln\lambda^{\mathrm{obs}},\xi,\ln\mathcal{M}^{\mathrm{obs}},\ln C^{\mathrm{obs}})$, discarding the total cluster count $N_\mathrm{tot}$. Since $N_\mathrm{tot}$ is the primary channel through which $\Omega_m h^2$ enters the observable, this normalization renders the MMD discrepancy virtually insensitive to $\Omega_m h^2$. NPE and NLE, by contrast, include explicit log-cluster-count features in their 50-dimensional summary statistic, recovering partial sensitivity to $\Omega_m h^2$ (See Section~\ref{app:distribution_vs_abundance} for more details). However, this apparent gain comes at the cost of calibration: NPE reaches only $70\%$ and $72\%$ coverage for $\Omega_m h^2$ and $\sigma_8$ respectively, and NLE reaches $74\%$ and $76\%$, well below the nominal $90\%$, indicating that the neural posteriors are overconfident rather than genuinely better constrained. \texttt{OASIS}, meanwhile, achieves near-zero mean bias ($\Delta\Omega_m h^2 = -0.004$, $\Delta\sigma_8 = 0.001$) and conservative-to-nominal coverage ($82\%$ and $98\%$), confirming that its intervals are well-calibrated at the expense of width. The bias of all three methods is comparably small, so the undercoverage of the neural methods reflects posterior overconfidence, not a systematic shift.

\paragraph{Computational Cost.}
\texttt{OASIS} requires no training phase: given an observed catalog, the 2{,}000 forward simulations are embarrassingly parallel and complete in approximately $1.2$ minutes per realization on 128 CPU workers. Over the 50-realization benchmark, this amounts to roughly $60$ minutes of wall time. The cost scales linearly with the number of proposals and is paid anew for each observation, as the method is not amortized. NPE and NLE, by contrast, incur a one-time training cost of approximately $6$ minutes on 5{,}000 simulations; once trained, posterior inference for any new observation reduces to a single neural-network forward pass taking under a second.

\section{Information Content of Distribution Matching vs.\ Abundance Likelihoods}
\label{app:distribution_vs_abundance}

In cluster cosmology, the statistical constraining power arises from two distinct sources of information: the total abundance of detected clusters and the distribution of their observable properties \cite{allen2011cosmological}. The expected number of detected clusters under cosmological parameters $\theta$ is

\begin{equation}
\mu(\theta)
=
\int dz \, dm \,
S(m,z)\,
\frac{dn(m,z\mid\theta)}{dm}
\frac{dV(z\mid\theta)}{dz},
\end{equation}

where $dn/dm$ is the halo mass function, $dV/dz$ is the comoving volume element, and $S(m,z)$ denotes the effective survey selection function. The corresponding likelihood for a catalog containing $N_{\mathrm{obs}}$ detected clusters with observables $\{x_i\}_{i=1}^{N_{\mathrm{obs}}}$ can be written schematically as

\begin{equation}
\mathcal{L}(\theta) = \mathrm{Pois}(N_{\mathrm{obs}}\mid \mu(\theta)) \prod_{i=1}^{N_{\mathrm{obs}}} p(x_i\mid\theta),
\label{eq:cluster_likelihood_factorization}
\end{equation}
where
\begin{equation}
p(x\mid\theta) = \frac{1}{\mu(\theta)} \frac{d\mu(x\mid\theta)}{dx}
\end{equation}

is the normalized distribution of cluster observables. Equation~\eqref{eq:cluster_likelihood_factorization} makes explicit that cosmological information enters through two channels:

\begin{enumerate}
    \item the Poisson abundance term, which constrains the overall normalization of the halo population, and
    \item the normalized observable distribution, which constrains the shape of the detected population in mass, redshift, and observable space.
\end{enumerate}

The NPE and NLE baselines considered in this work use hand-designed summary statistics that explicitly include cluster-count information, marginal histograms, and redshift-richness distributions. Consequently, these methods retain information from both the abundance normalization and the observable distribution. By contrast, the current \texttt{OASIS} implementation compares empirical observable distributions using the Maximum Mean Discrepancy (MMD),

\begin{equation}
\widehat{\Delta}(\theta)
=
\widehat{\mathrm{MMD}}^2_k
\left(
\widehat{P}_{\mathrm{obs}},
\widehat{P}^{\mathrm{sim}}_{\theta}
\right),
\end{equation}

where $\widehat{P}_{\mathrm{obs}}$ and $\widehat{P}^{\mathrm{sim}}_{\theta}$ are normalized empirical distributions. As implemented here, this comparison is primarily sensitive to changes in the shape of the observable population rather than to the absolute number of detected clusters. Information carried exclusively by the abundance normalization is therefore partially discarded.

This distinction is particularly relevant for $\Omega_m h^2$. In the cluster mass and redshift regime probed in this benchmark, the halo mass function depends exponentially on $\sigma_8$, making the shape of the observed population highly informative for that parameter. In contrast, $\Omega_m h^2$ enters more weakly through the growth history, power-spectrum shape, and cosmological volume element \citep{allen2011cosmological}. As a result, much of the constraining power on $\Omega_m h^2$ arises from the overall abundance normalization rather than from subtle changes in the normalized observable distribution alone. Discarding abundance information, therefore, naturally broadens the posterior along this direction.

The broader posterior intervals obtained by \texttt{OASIS} for $\Omega_m h^2$ should therefore not be interpreted as a failure of the method, but rather as a consequence of a deliberately restricted information representation. The goal of our experiment is not to maximize cosmological precision, but instead to demonstrate that \texttt{OASIS} can operate in a realistic cluster-cosmology setting involving correlated multi-wavelength observables, heterogeneous survey overlap, and complex selection effects, while still producing competitive and well-calibrated constraints without requiring hand-crafted summary statistics or amortized neural training.

Future extensions could incorporate abundance information directly into the distributional comparison objective, for example, by augmenting the MMD with count-sensitive terms or by jointly comparing both normalized distributions and total catalog counts. Such extensions would allow \texttt{OASIS} to recover a larger fraction of the cosmological information available in cluster abundance surveys while retaining the flexibility of likelihood-free distribution matching.


\end{document}